\newtheorem{theorem}{Theorem}
\newtheorem{lemma}[theorem]{Lemma}
\newtheorem{conjecture}{Conjecture}
\newcommand{\occ}[0]{\mathrm{occ}}
\newcommand{\str}[1]{\textrm{str}(#1)}
\newcommand{\loc}{\textrm{locus}}
\newcommand{\ors}{\textsf{RangeSuccessor}}
\newcommand{\orp}{\textsf{RangePredecessor}}
\newcommand{\yes}{\textsc{Yes}}
\newcommand{\no}{\textsc{No}}
\newcommand{\findi}[1]{\textsf{FindConsecutive}_{P_2}(#1)} 
\newcommand{\findj}[1]{\textsf{FindConsecutive}_{P_1}(#1)}
\newcommand{\qexistsab}[0]{\textsf{Exists}}
\newcommand{\qcountab}[0]{\textsf{Count}}
\newcommand{\qreportab}[0]{\textsf{Report}}
\title{Gapped Indexing for Consecutive Occurrences}
\author{Philip Bille \\ \texttt{phbi@dtu.dk} \and Inge Li G{\o}rtz \\ \texttt{inge@dtu.dk} \and Max Rish{\o}j Pedersen \\ \texttt{mhrpe@dtu.dk} \and Teresa Anna Steiner \\ \texttt{terst@dtu.dk}}
\date{}
\begin{document}
\maketitle

\begin{abstract}
The classic string indexing problem is to preprocess a string $S$ into a compact data structure that supports efficient pattern matching queries. Typical queries include \emph{existential queries} (decide if the pattern occurs in $S$), \emph{reporting queries} (return all positions where the pattern occurs), and \emph{counting queries} (return the number of occurrences of the pattern). In this paper we consider a variant of string indexing, where the goal is to compactly represent the string such that given two patterns $P_1$ and $P_2$ and a gap range $[\alpha, \beta]$ we can quickly find the consecutive occurrences of $P_1$ and $P_2$ with distance in $[\alpha, \beta]$, i.e., pairs of occurrences immediately following each other and with distance within the range. We present data structures that use $\widetilde{O}(n)$ space and query time  $\widetilde{O}(|P_1|+|P_2|+n^{2/3})$ for existence and counting and $\widetilde{O}(|P_1|+|P_2|+n^{2/3}\occ^{1/3})$ for reporting. We complement this with a conditional lower bound based on the set intersection problem showing that any solution using $\widetilde{O}(n)$ space must use $\widetilde{\Omega}(|P_1| + |P_2| + \sqrt{n})$ query time. To obtain our results we develop new techniques and ideas of independent interest including a new suffix tree decomposition and hardness of a variant of the set intersection problem. 
\end{abstract}

\section{Introduction}\label{sec:Intro}
The classic string indexing problem is to preprocess a string $S$ into a compact data structure that supports efficient pattern matching queries. Typical queries include \emph{existential queries} (decide if the pattern occurs in $S$), \emph{reporting queries} (return all positions where the pattern occurs), and \emph{counting queries} (return the number of occurrences of the pattern). 
An important variant of this problem is the \emph{gapped string indexing problem}~\cite{bille2014substring,iliopoulos2009indexing,bader2016practical,caceres2020fast,bille2014string,lewenstein2011indexing, keller2007range}. 
Here, the goal is to compactly represent the string such that given two patterns $P_1$ and $P_2$ and a \emph{gap range} $[\alpha, \beta]$ we can quickly find occurrences of $P_1$ and $P_2$ with distance in $[\alpha, \beta]$. Searching and indexing with gaps is frequently used in computational biology applications~\cite{BB1994, HPFB1999, FG2008, HSSS2011, Myers1996, MM1993a, NR2003, bader2016practical,caceres2020fast, BGVW2012}.

Another variant is \emph{string indexing for consecutive  occurrences}~\cite{navarro2015reporting,DBLP:conf/fsttcs/BilleGPRS20}. Here, the goal is to compactly represent the string such that given a pattern $P$ and a gap range $[\alpha, \beta]$ we can quickly find \emph{consecutive occurrences} of $P$ with distance in $[\alpha, \beta]$, i.e., pairs of occurrences immediately following each other and with distance within the range. 

In this paper, we consider the natural combination of these variants that we call \emph{gapped indexing for consecutive occurrences}. Here, the goal is to compactly represent the string such that given two patterns $P_1$ and $P_2$ and a gap range $[\alpha, \beta]$ we can quickly find the consecutive occurrences of $P_1$ and $P_2$ with distance in $[\alpha, \beta]$.  

We can apply standard techniques to obtain several simple solutions to the problem. To state the bounds, let $n$ be the size of $S$. If we store the suffix tree for $S$, we can answer queries by searching for both query strings, merging the results, and removing all non-consecutive occurrences. This leads to a solution using $O(n)$ space and $\widetilde{O}(|P_1| + |P_2| + \occ_{P_1} +  \occ_{P_2})$ query time, where $\occ_{P_1}$ and $\occ_{P_2}$ denote the number of occurrences of $P_1$ and $P_2$, respectively\footnote{$\widetilde{O}$ and $\widetilde{\Omega}$ ignores polylogarithmic factors}. However, $\occ_{P_1} + \occ_{P_2}$ may be as large as $\Omega(n)$ and much larger than the size of the output.

Alternatively, we can obtain a fast query time in terms of the output at the cost of increasing the space to $\Omega(n^2)$. To do so, store for each node $v$ in the suffix tree the set of all consecutive occurrences $(i,j)$ where $i$ is a position below $v$ in a 2D range searching data structure organized by the lexicographic order of $j$ and their distance from descendants of $v$. To answer a query, we then perform a 2D range search in the structure corresponding to $P_1$ using the lexicographic range in the suffix tree defined by $P_2$ and the gap range. This leads to a solution for reporting queries using $\widetilde{O}(n^2)$ space and $\widetilde{O}(|P_1| + |P_2| + \occ)$ time, where $\occ$ is size of the output. For existence and counting, we obtain the same bound without the $\occ$ term.    



In this paper, we introduce new solutions that significantly improve the above time-space trade-offs. Specifically, we present data structures that use  $\widetilde{O}(n)$ space and query time  $\widetilde{O}(|P_1|+|P_2|+n^{2/3})$ for existence and counting and $\widetilde{O}(|P_1|+|P_2|+n^{2/3}\occ^{1/3})$ for reporting. We complement this with a conditional lower bound based on the set intersection problem showing that any solution using $\widetilde{O}(n)$ space must use $\widetilde{\Omega}(|P_1| + |P_2| + \sqrt{n})$ query time. To obtain our results we develop new techniques and ideas of independent interest including a new suffix tree decomposition and hardness of a variant of the set intersection problem.  

\subsection{Setup and Results}
Throughout the paper, let $S$ be a string of length $n$. Given two patterns $P_1$ and $P_2$ a  \emph{consecutive occurrence} in $S$ is a pair of occurrences $(i,j)$, $0 \leq i < j < |S|$ where $i$ is an occurrence of $P_1$ and $j$ an occurrence of $P_2$, such that no other occurrences of either $P_1$ or $P_2$ occurs in between. The \emph{distance} of a consecutive occurrence $(i,j)$ is $j-i$. Our goal is to preprocess $S$ into a compact data structure that given pattern strings $P_1$ and $P_2$ and a gap range $[\alpha, \beta]$ supports the following queries:  
\begin{itemize}
    \item $\qexistsab(P_1, P_2, \alpha, \beta)$: 
    determine if there is a consecutive occurrence of $P_1$ and $P_2$ with distance within the range $[\alpha, \beta]$.
    \item $\qcountab(P_1, P_2, \alpha, \beta)$: return the number of consecutive occurrences of $P_1$ and $P_2$ with distance within the range $[\alpha, \beta]$. 
    \item $\qreportab(P_1, P_2, \alpha, \beta)$:
    report all consecutive occurrences of $P_1$ and $P_2$ with distance within the range $[\alpha, \beta]$.
\end{itemize}
We present new data structures with the following bounds: 
\begin{theorem}\label{thm:main-result} Given a string of length $n$, we can 
    \begin{enumerate}[(i)]
        \item \label{lem:ors,linear} construct an $O(n)$ space data structure that supports $\qexistsab(P_1, P_2, \alpha, \beta)$ and $\qcountab(P_1, P_2, \alpha, \beta)$ queries in $O(|P_1| + |P_2| + n^{2/3}\log^{\epsilon}n)$ time for constant $\epsilon>0$, or
        \label{thm:ab-existence-and-count}
        \item \label{lem:ors,fast} construct an $O(n \log n)$ space data structure that supports   $\qreportab(P_1, P_2, \alpha, \beta)$ queries in ${O}(|P_1| + |P_2| +  n^{2/3} \occ^{1/3}\log n\log \log n)$ time, where $\occ$ is the size of the output.
        \label{thm:ab-reporting}
    \end{enumerate}
\end{theorem}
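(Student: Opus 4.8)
The plan is to build a single augmented suffix tree and drive every query through a frequency threshold $\tau$ that distinguishes whether $\occ_{P_1}$ and $\occ_{P_2}$ exceed $\tau$. First I would construct the suffix tree of $S$ so that searching $P_1$ and $P_2$ costs $O(|P_1|+|P_2|)$ and returns their suffix-array intervals together with the counts $\occ_{P_1},\occ_{P_2}$. On top of the suffix array I would store the classical point set of $(\text{lex-rank},\text{text position})$ pairs supporting $\orp$ and $\ors$ in $O(\log^{\epsilon} n)$ time and $O(n)$ space. This gives the key primitive underlying $\findj{\cdot}$ and $\findi{\cdot}$: given an occurrence $j$ of $P_2$, a single $\orp$ query in the interval of $P_1$ returns the largest $P_1$-occurrence $i<j$, and one further $\orp$ query in the interval of $P_2$ certifies that $(i,j)$ is genuinely consecutive, i.e.\ that no occurrence of either pattern lies strictly between; the whole test costs $O(\log^{\epsilon} n)$.

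The two easy cases are those in which at least one pattern is rare. If $\occ_{P_2}\le\tau$, I would extract the $\le\tau$ occurrences of $P_2$ and, for each occurrence $j$, call $\findj{j}$ to obtain its unique candidate $P_1$-partner and test the gap against $[\alpha,\beta]$; symmetrically with $\findi{\cdot}$ when $P_1$ is the rarer pattern. Since every consecutive occurrence is determined by its $P_2$-endpoint (resp.\ its $P_1$-endpoint), this enumerates exactly the consecutive occurrences, so both $\qcountab$ and $\qreportab$ follow immediately, at cost $O(\tau\log^{\epsilon}n)$. The entire difficulty is therefore concentrated in the case where both patterns are frequent.

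The hard case — and the main obstacle — is $\occ_{P_1},\occ_{P_2}>\tau$, which is exactly where the new suffix-tree decomposition is needed. Here the loci of $P_1$ and $P_2$ lie in the subtree $T_\tau$ of suffix-tree nodes with more than $\tau$ leaves, whose maximal nodes have pairwise-disjoint subtrees and hence number only $O(n/\tau)$. I would mark these $O(n/\tau)$ nodes and precompute, for each ordered pair of marked nodes, a geometric structure supporting gap-range counting of the consecutive occurrences that the pair contributes; a query maps $P_1$ and $P_2$ to their marked descendants and issues $O((n/\tau)^2)$ range queries. Balancing the easy-case cost $\widetilde O(\tau)$ against $\widetilde O((n/\tau)^2)$ yields $\tau=n^{2/3}$ and query time $\widetilde O(n^{2/3})$, matching \eqref{thm:ab-existence-and-count}. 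I expect the genuine difficulty to be twofold. First, space: storing a full sorted distance list per pair is far too large, so the total must be pushed to $\widetilde O(n)$ by a charging argument over the gaps between successive occurrences of a pattern, which partition the text and let each position be charged within the per-pair budget of $O(n/\tau)$. Second, and more delicate, is correctness of the ``no occurrence in between'' condition under the decomposition: passing from $P_1$ to a more specific marked descendant deletes occurrences and can spuriously merge previously non-consecutive pairs. For reporting this can be repaired by validating each generated candidate with $\orp$/$\ors$ queries against the \emph{full} intervals of $P_1$ and $P_2$, but for counting no per-candidate validation is affordable, so the precomputed aggregates must be defined relative to the true occurrence sets in a way that keeps $\qcountab$ exact; arranging this is the crux of the construction.

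Finally, the output-sensitive reporting bound of \eqref{thm:ab-reporting} comes from making $\tau$ depend on the output size. Setting $\tau=n^{2/3}\occ^{1/3}$ makes the easy cases cost $\widetilde O(n^{2/3}\occ^{1/3})$ while shrinking the marked set so that the hard case costs $\widetilde O((n/\tau)^2+\occ)=\widetilde O(\occ)$, which is within budget since $\occ\le n$ for consecutive occurrences. Because $\occ$ is not known in advance, I would store the hard-case structure for the $O(\log n)$ thresholds $\tau=2^{k}$ — incurring the extra logarithmic factor that raises the space to $O(n\log n)$ — and perform an exponential search over $k$ at query time, stopping once the discovered output certifies that the chosen threshold is correct.
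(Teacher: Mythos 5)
Your easy case (one rare pattern, enumerate and call $\findi{\cdot}$/$\findj{\cdot}$) is sound and matches the paper's primitive, but the hard case contains two concrete flaws that the paper's actual construction is specifically designed to avoid. First, your decomposition is wrong: marking the $O(n/\tau)$ \emph{maximal} nodes with more than $\tau$ leaves does not control the occurrences of $P_1$ that lie below $\loc(P_1)$ but \emph{not} below any marked node. In a caterpillar-shaped heavy subtree (a long path in which every node has one leaf child, ending in a node with $\tau+1$ leaves), every path node has more than $\tau$ leaves below it, yet $\Omega(n)$ leaves hang off the path and are below no marked node. These occurrences are invisible to your pair structures and far too numerous to enumerate, so both counting and reporting break. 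The paper instead uses a \emph{cluster partition} of the suffix tree into $O(n/\tau)$ connected clusters of at most $\tau$ nodes, each with at most two boundary nodes: then $\loc(P_1)$ sits on a spine with a \emph{single} lower boundary node $b_1$, and the occurrences of $P_1$ are exactly those of $\str{b_1}$ plus at most $\tau$ in-cluster leaves, which is what makes both the correction for ``false occurrences'' (your unsolved ``crux'') and the query-time bound go through. Second, your space bound fails: without any restriction on distances, a gap-range counting structure for a pair $(u,v)$ needs space up to the number of their consecutive occurrences, and your own charging argument only charges each text position once per pair involving its marked ancestor, i.e.\ $O(n/\tau)$ times, giving $\Omega(n^2/\tau)=\Omega(n^{4/3})$ total at $\tau=n^{2/3}$, not $\widetilde{O}(n)$. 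The paper's missing ingredient here is the split by \emph{distance}, not frequency: precomputed arrays $M(u,v)$ only store counts for distances up to $n/\tau$ (total space $O((n/\tau)^3)=O(n)$), while consecutive occurrences with distance greater than $n/\tau$ are found at query time by segmenting the text into $\tau$ blocks --- any such pair spans a block boundary with $i$ the last $P_1$-occurrence in its block, so only $O(\tau)$ candidates need checking. This large/small-distance segmentation idea is entirely absent from your proposal.

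For reporting, your route (thresholds $\tau=2^k$ plus exponential search) also does not give output-sensitivity: the number of ``false'' candidates generated from pair structures and rejected by validation is not bounded by any function of $\occ$, so the $n^{2/3}\occ^{1/3}$ bound cannot be argued this way, and occurrences below no marked node are again missed entirely. The paper's mechanism is different: an \emph{induced suffix tree decomposition} (a balanced binary tree of suffix trees of text halves, total size $O(n\log n)$), where one recurses into a child tree only after an existence query certifies that a valid occurrence is present, checks the single border pair straddling the midpoint at each step, and bounds the total work by H\"older's inequality over the $O(\occ\log n)$ visited nodes. Both of the paper's key structural tools --- the cluster partition with boundary-node arrays and the induced suffix tree decomposition --- are absent from your proposal, and the difficulties you correctly flag (exact counting under the decomposition, and per-pair space) are exactly the ones these tools resolve.
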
  
Hence, ignoring polylogarithmic factors, Theorem~\ref{thm:main-result} achieves $\widetilde{O}(n)$ space and query time $\widetilde{O}(|P_1| + |P_2|+n^{2/3})$ for existence and counting and $\widetilde{O}(|P_1| + |P_2|+n^{2/3}\occ^{1/3})$ for reporting. Compared to the above mentioned simple suffix tree approaches that finds all occurrences of the query strings and merges them, we match the $\widetilde{O}(n)$ space, while reducing the dependency on $n$ in the query time from worst-case $\Omega(|P_1| + |P_2|+n)$ to $\widetilde{O}(|P_1| + |P_2|+n^{2/3})$ for $\qexistsab$ and $\qcountab$ queries and $\widetilde{O}(|P_1| + |P_2|+n^{2/3}\occ^{1/3})$ for $\qreportab$ queries. 

We complement Theorem~\ref{thm:main-result} with a conditional lower bound based on the set intersection problem. Specifically, we use the Strong SetDisjointness Conjecture from~\cite{DBLP:conf/wads/GoldsteinKLP17} to obtain the following result: 
\begin{theorem}\label{thm:lower_bound}
    Assuming the Strong SetDisjointness Conjecture, any data structure on a string $S$ of length $n$ that supports \qexistsab\ queries in $O(n^{\delta}+|P_1|+|P_2|)$ time, for $\delta\in[0,1/2]$, requires  $\widetilde{\Omega}\left(n^{2-2\delta - o(1)}\right)$ space. This bound also holds if we limit the queries to only support ranges of the form $[0,\beta]$, and even if the bound $\beta$ is known at preprocessing time.
\end{theorem}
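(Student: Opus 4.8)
The plan is to prove the bound by reduction from \textsc{SetDisjointness}, whose difficulty is quantified by the Strong SetDisjointness Conjecture of~\cite{DBLP:conf/wads/GoldsteinKLP17}: any data structure storing a family of sets $S_1,\dots,S_m$ of total size $N$ and answering ``is $S_i\cap S_j=\emptyset$?'' in time $t$ must use $\widetilde{\Omega}(N^2/t^2)$ space. It therefore suffices to encode such an instance as a string $S$ of length $n=\widetilde{\Theta}(N)$ together with patterns of length $\widetilde{O}(1)$, so that a single $\qexistsab$ query decides one disjointness query. Indeed, a structure answering $\qexistsab$ in $O(n^{\delta}+|P_1|+|P_2|)$ time would then answer disjointness in time $\widetilde{O}(N^{\delta})$, forcing space $\widetilde{\Omega}(N^2/N^{2\delta})=\widetilde{\Omega}(N^{2-2\delta})=\widetilde{\Omega}(n^{2-2\delta-o(1)})$ for every $\delta\in[0,1/2]$, where the $o(1)$ absorbs the $\mathrm{polylog}$ factors relating $n$ and $N$.

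For the encoding I would first fix markers: for each set index $i$ let $M_i=\#\,\mathrm{bin}(i)$ be a string of common length $L=1+\lceil\log m\rceil$ whose unique symbol $\#$ appears only in its first position. Since all markers share length $L$ and start with $\#$, the pattern $M_i$ occurs in $S$ \emph{exactly} at the marker-aligned positions carrying index $i$, and markers of other indices neither create nor destroy such occurrences. For every universe element $e$ I then build a block $B_e$ consisting of $M_i$ for all $i$ with $e\in S_i$, listed in increasing order of $i$, and I concatenate the blocks separated by runs $\$^{g}$ of a fresh symbol. Taking $\beta$ to be the largest distance between two markers inside a single block and choosing $g>\beta$ enforces the key dichotomy: two markers are within distance $\beta$ iff they lie in the same block, i.e.\ correspond to the same element, whereas markers in different blocks are always at distance more than $\beta$. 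The query range is set to $[0,\beta]$ with $\beta$ fixed in advance.

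To decide $S_i\cap S_j=\emptyset$ I relabel so that $a=\min(i,j)<b=\max(i,j)$ and issue $\qexistsab(M_a,M_b,0,\beta)$. If some $e\in S_a\cap S_b$ exists then $B_e$ contains $M_a$ before $M_b$ (the block is sorted and $a<b$) with no occurrence of $M_a$ or $M_b$ between them and at distance at most $\beta$, so the query returns \yes. If instead $S_a\cap S_b=\emptyset$ then every occurrence of $M_a$ and every occurrence of $M_b$ lie in distinct blocks, so every consecutive $(M_a,M_b)$ pair crosses a padding run and has distance more than $\beta$, and the query returns \no. As $|M_a|+|M_b|=2L=\widetilde{O}(1)$, the query-time and space accounting above applies verbatim, and because only ranges of the form $[0,\beta]$ with a preprocessing-time value of $\beta$ are used, the final strengthening of the statement is automatic.

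The step I expect to be the main obstacle is keeping $n=\widetilde{\Theta}(N)$ rather than $\widetilde{\Theta}(N\,m)$. The total padding is about (number of nonempty blocks)$\times g$, and $g$ is governed by the largest block, i.e.\ by the maximum element frequency $f=\max_e|\{i:e\in S_i\}|$; since $N=\sum_e|\{i:e\in S_i\}|\le U\cdot f$, the padding is $\widetilde{O}(U\cdot f)$, which equals $\widetilde{\Theta}(N)$ only when the frequencies are near-uniform and can be as large as $\widetilde{\Theta}(N^{3/2})$ otherwise. I therefore plan to apply the conjecture not to an arbitrary instance but to a \emph{balanced} variant of set intersection---near-uniform set sizes over a universe of size $\widetilde{\Theta}(\sqrt N)$ with near-uniform element frequencies---so that $U\cdot f=\widetilde{O}(N)$. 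Proving that the Strong SetDisjointness Conjecture survives the passage to this balanced variant, via a self-reduction that redistributes elements while preserving the space/time tradeoff, is the technical crux; once it is established, the string construction and its correctness are routine.
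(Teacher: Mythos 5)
Your string construction and the top-level accounting match the paper's reduction almost exactly: the paper likewise encodes each set by a short binary marker, builds one block per universe element listing (in increasing set index) the markers of the sets containing it, separates blocks by padding runs, and queries $\qexistsab(w_i, w_j, 0, B)$ with a range $[0,B]$ fixed at preprocessing time. You also correctly identify the one real obstacle: with skewed frequencies the padding blows up to roughly $U\cdot f_{\max}$, so the conjecture must be applied to a variant of SetDisjointness in which element frequencies are (near-)uniform. But your proposal stops exactly there: you defer the hardness of this balanced variant as ``the technical crux,'' to be established by some unspecified self-reduction. That missing lemma is precisely the paper's main technical contribution in the lower-bound section (Lemma~\ref{lem:SDFF}, hardness of the $f$-FrequencySetDisjointness problem), so as it stands the proof is incomplete at its load-bearing point.

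For what it is worth, the gap is easier to fill than your proposal suggests, and you do not need near-uniform set sizes or a universe of size $\widetilde{\Theta}(\sqrt{N})$: uniform element frequency alone gives $U \cdot f = N$, since $U = N/f$. The paper's self-reduction is as follows. Given an arbitrary instance $S_1,\dots,S_m$, build $\log m$ sub-instances, where instance $j$ contains the sets $S_i^j = \{e \in S_i : 2^{j-1} \le f_e < 2^j\}$ for every $i$, together with $2^{j-1}$ dummy sets; each element $e$ with $2^{j-1}\le f_e < 2^j$ is added to the first $2^j - f_e$ dummy sets, so that every element of instance $j$ has frequency exactly $2^j$. Each instance has $O(m)$ sets and total size at most $2N$. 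A disjointness query on $(S_{i_1}, S_{i_2})$ is answered by querying $(S^j_{i_1}, S^j_{i_2})$ in every instance (dummy sets are never queried); this is correct because subsets preserve disjointness, and any witness $e \in S_{i_1} \cap S_{i_2}$ survives in the instance of its frequency class. The reduction multiplies both time and space by only $O(\log m)$, so a fixed-frequency structure that is too good contradicts Conjecture~\ref{con:strong_disjoint}. With that lemma in hand, your encoding (applied to a uniform-frequency instance, so that $n = \widetilde{\Theta}(N)$) completes the proof essentially as you describe.
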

With $\delta = 1/2$, Theorem~\ref{thm:lower_bound} implies that any near linear space solution must have query time $\widetilde{\Omega}(|P_1| + |P_2| + \sqrt{n})$. Thus, Theorem~\ref{thm:main-result} is optimal within a factor roughly $n^{1/6}$. On the other hand, with $\delta = 0$, Theorem~\ref{thm:lower_bound} implies that any solution with optimal $\widetilde{O}(|P_1| + |P_2|)$ query time must use $\widetilde{\Omega}(n^{2-o(1)})$ space. Note that this matches the trade-off achieved by the above mentioned simple solution that combines suffix trees with  two-dimensional range searching data structures. 
 
Finally, note that Theorem~\ref{thm:lower_bound} holds even when the gap range is of the form $[0, \beta]$. As a simple extension of our techniques, we show how to improve our solution from Theorem~\ref{thm:main-result} to match Theorem~\ref{thm:lower_bound} in this special case.

 
    
\subsection{Techniques}
To obtain our results we develop new techniques and show new interesting properties of consecutive occurrences. We first consider $\qexistsab$ and $\qcountab$ queries. The key idea is to split gap ranges into large and small distances. For large distances  that there can only be a limited number of consecutive occurrences and we show how these can be efficiently handled using a segmentation of the string. For small distances, we cluster the suffix tree and store precomputed answers for selected pairs of nodes. Since the number of distinct distances is small we obtain an efficient bound on the space. 

We extend our solution for $\qexistsab$ and $\qcountab$ queries to handle $\qreportab$ queries. To do so we develop a new decomposition of suffix trees, called the \emph{induced suffix tree decomposition} that recursively divides the suffix tree in half by index in the string. Hence, the decomposition is a balanced binary tree, where every node stores the  suffix tree of a substring of $S$. We show how to traverse this structure to efficiently recover the consecutive occurrences. 

For our conditional lower bound we show a reduction based on the set intersection problem. Along the way we show that set intersection remains hard even if all elements in the instance have the same frequency.




    
\subsection{Related Work}
As mentioned, string indexing for gaps and consecutive occurrences are the most closely related lines of work to this paper. Another related area is \emph{document indexing}, where the goal is to preprocess a collection of strings, called \emph{documents}, to report those documents that contain patterns subject to various constraints. 
For a comprehensive overview of this area see the survey by Navarro~\cite{navarro2014spaces}.

A well studied line of work within document indexing is \emph{document indexing for top-$k$} queries ~\cite{munro2020ranked,shah2013top,hon2014space,biswas2018ranked,hon2010efficient,navarro2017time,Tsur13,Hon2013indexes,Hon2013faster,NavarroT14,MunroNNST17}. The goal is to efficiently report the top-$k$ documents of smallest weight, where the weight is a function of the query. 
Specifically, the weight can be the distance of a pair of occurrences of the same or two different query patterns ~\cite{hon2014space,navarro2017time,shah2013top,MunroNNST17}. 
The techniques for top-$k$ indexing (see e.g. Hon~et~al.~\cite{hon2014space}) can be adapted to efficiently solve gapped indexing for consecutive occurrences in the special case when the gap range is of the form $[0, \beta]$. However, since these techniques heavily exploit that the goal is to find the top-$k$ \emph{closest occurrences}, they do not generalize to general gap ranges.





There are several results on conditional lower bounds for pattern matching and string indexing \cite{larsen2015hardness,DBLP:conf/wads/GoldsteinKLP17,DBLP:conf/icalp/AmirCLL14,DBLP:conf/soda/KopelowitzPP16,DBLP:conf/isaac/AmirKLPPS16}. Notably, Ferragina et al. \cite{DBLP:journals/jcss/FerraginaKMS03} and Cohen and Porat \cite{DBLP:journals/tcs/CohenP10} reduce the \emph{two dimensional substring indexing problem} to set intersection (though the goal was to prove an upper, not a lower bound). In the two dimensional substring indexing problem the goal is to preprocess pairs of strings such that given two patterns we can output the pairs that contain a pattern each. Larsen et al. \cite{larsen2015hardness} prove a conditional lower bound for the document version of indexing for two patterns, i.e., finding all documents containing both of two query patterns. Goldstein et al. \cite{DBLP:conf/wads/GoldsteinKLP17} show that similar lower bounds can be achieved via conjectured hardness of set intersection. Thus, there are several results linking indexing for two patterns and set intersection. Our reduction is still quite different, since we need a translation from intersection to distance. 



\subsection{Outline}
The paper is organized as follows.
In Section~\ref{sec:Prelims} we define notation and recall some useful results.
In Section~\ref{sec:ab-existence} we show how to answer $\qexistsab$ and $\qcountab$ queries, proving Theorem~\ref{thm:main-result}(\ref{thm:b-existence-and-count}).
In Section~\ref{sec:ab-reporting} we show how to answer $\qreportab$ queries, proving Theorem~\ref{thm:main-result}(\ref{thm:ab-reporting}).
In Section~\ref{sec:lower-bound} we prove the lower bound, proving Theorem~\ref{thm:lower_bound}.
Finally, in Section~\ref{sec:leq-b-reporting} we apply our techniques to solve the variant where $\alpha = 0$.


\section{Preliminaries}\label{sec:Prelims}

\paragraph{Strings.}
A \emph{string} $S$ of length $n$ is a sequence $S[0]S[1]\dots S[n-1]$ of characters from an alphabet $\Sigma$. A contiguous subsequence $S[i,j]=S[i]S[i+1]\dots S[j]$ is a \emph{substring} of $S$. The substrings of the form $S[i,n-1]$ are the \emph{suffixes} of $S$. 
The \emph{suffix tree}~\cite{weiner1973linear} is a compact trie of all suffixes of $S\$$, where \$ is a symbol not in the alphabet, and is lexicographically smaller than any letter in the alphabet. Each leaf is labelled with the index $i$ of the suffix $S[i,n-1]$ it corresponds to. Using perfect hashing~\cite{FKS1984}, the suffix tree can be stored in $O(n)$ space and solve the string indexing problem (i.e., find and report all occurrences of a pattern $P$) in $O(m+\occ)$ time, where $m$ is the length of $P$ and $\occ$ is the number of times $P$ occurs in $S$.

For any node $v$ in the suffix tree, we define $\str{v}$ to be the string found by concatenating all labels on the path from the root to $v$.
The \emph{locus} of a string $P$, denoted $\loc(P)$, is the minimum depth node $v$ such that $P$ is a prefix of $\str{v}$. 
The \emph{suffix array} stores the suffix indices of $S\$$ in lexicographic order. We identify each leaf in the suffix tree with the suffix index it represents. The suffix tree has the property that the leaves below any node represent suffixes that appear in consecutive order in the suffix array. For any node $v$ in the suffix tree, $\textrm{range}(v)$ denotes the range that $v$ spans in the suffix array. 
The \emph{inverse suffix array} is the inverse permutation of the suffix array, that is, an array where the $i$th element is the index of suffix $i$ in the suffix array.

\paragraph{Orthogonal range successor.}
The \emph{orthogonal range successor problem} is to  preprocess an array $A[0,\dots,n-1]$ into a data structure that  efficiently supports the following queries:
\begin{itemize}
    \item $\ors(a,b,x)$: return the successor of $x$ in $A[a,\dots, b]$, that is, the minimum $y> x$ such that there is an $i\in[a,b]$ with $A[i]=y$.
    \item $\orp(a,b,x)$: return the predecessor of $x$ in $A[a,\dots, b]$, that is, the maximum $y< x$ such that there is an $i\in[a,b]$ with $A[i]=y$.
\end{itemize}

\section{Existence and Counting}\label{sec:ab-existence}
In this section we give a data structure that can answer $\qexistsab$ and $\qcountab$ queries.
The main idea is to split the query interval into ``large" and ``small" distances.
For large distances we exploit that there can only be a small number of consecutive occurrences and we check them with a simple segmentation of $S$.
For small distances we cluster the suffix tree and precompute answers for selected pairs of nodes. 

We first show how to use orthogonal range successor queries to find consecutive occurrences.
Then we define the clustering scheme used for the suffix tree and give the complete data structure.

\subsection{Using Orthogonal Range Successor to Find Consecutive Occurrences}\label{subsec:cons_occ}
Assume we have found the loci of $P_1$ and $P_2$ in the suffix tree.
Then we can answer the following queries in a constant number of orthogonal range successor queries:
\begin{itemize}
    \item $\findi{i}$: given an occurrence $i$ of $P_1$, return the consecutive occurrence $(i,j)$ of $P_1$ and $P_2$, if it exists, and \no\ otherwise.
    \item $\findj{j}$: given an occurrence $j$ of $P_2$, return the consecutive occurrence $(i,j)$ of $P_1$ and $P_2$, if it exists, and \no\ otherwise.
\end{itemize}
Given a query $\findi{i}$, we answer as follows.
Compute $j = \ors(\textrm{range}(\loc(P_2)), i)$ to get the closest occurrence of $P_2$ after $i$.
Compute $i' = \orp(\textrm{range}(\loc(P_1)), j)$ to get the closest occurrence of $P_1$ before $j$.
If $i = i'$ then no other occurrence of $P_1$ exists between $i$ and $j$ and they are consecutive.
In that case we return $(i,j)$. Otherwise, we return $\no$. 

Similarly, we can answer $\findj{j}$ by first doing a \orp\ and then a \ors\ query.
Thus, given the loci of both patterns and a specific occurrence of either $P_1$ or $P_2$, we can in a constant number of $\ors$ and $\orp$ queries find the corresponding consecutive occurrence, if it exists.

\subsection{Data Structure}
To build the data structure we will use a cluster decomposition of the suffix tree.

\paragraph{Cluster Decomposition}
A cluster decomposition of a tree $T$ is defined as follows: For a connected subgraph $C\subseteq T$, a \emph{boundary node} $v$ is a node $v\in C$ such that either $v$ is the root of $T$, or $v$ has an edge leaving $C$ -- that is, there exists an edge $(v,u)$ in the tree $T$ such that $u\in T \setminus C$. 
A \emph{cluster} is a connected subgraph $C$ of $T$ with at most two boundary nodes.
A cluster with one boundary node is called a \emph{leaf cluster}. A cluster with two boundary nodes is called a \emph{path cluster}.
For a path cluster $C$, the two boundary nodes are connected by a unique path. We call this path the \emph{spine} of $C$. 
A \emph{cluster partition} is a partition of $T$ into clusters, i.e. a set $CP$ of clusters such that $\bigcup_{C\in CP}V(C)=V(T)$ and $\bigcup_{C\in CP}E(C)=E(T)$ and no two clusters in $CP$ share any edges. Here, $E(G)$ and $V(G)$ denote the edge and vertex set of a (sub)graph $G$, respectively. 
We need the next lemma which follows from well-known tree decompositions  ~\cite{alstrup1997minimizing, alstrup2000maintaining, alstrup2002improved, frederickson1997ambivalent} (see Bille and Gørtz~\cite{bille2011tree} for a direct proof).

\begin{lemma}\label{lem:cluster}
    Given a tree $T$ with $n$ nodes and a parameter $\tau$, there exists a cluster partition $CP$ such that $|CP|=O(n/\tau)$ and every $C\in CP$ has at most $\tau$ nodes. Furthermore, such a partition can be computed in $O(n)$ time.
\end{lemma}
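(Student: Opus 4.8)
The plan is to prove Lemma~\ref{lem:cluster} by a greedy bottom-up contraction of the tree, following the standard top tree / cluster partition constructions cited in the statement. First I would fix the parameter $\tau$ and process the nodes of $T$ in a post-order traversal, maintaining for each node a current ``pending'' cluster consisting of the subtree material that has been accumulated below it but not yet cut off into a finished cluster. The invariant I want to maintain is that every pending cluster has at most $\tau$ nodes and has the current node as its unique (lower) boundary node, so that it is a valid leaf cluster if finalized immediately.

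The core of the argument is the merging rule at an internal node $v$ with children $c_1, \dots, c_k$ carrying pending clusters $C_1, \dots, C_k$. I would greedily combine $v$ together with as many of the incoming pending clusters as possible without exceeding the size bound $\tau$: whenever adding the next child's pending cluster would push the total above $\tau$, I finalize that child's cluster as a completed cluster (its boundary being $v$'s child, or the spine endpoints if it is a path cluster) and start fresh. The key structural point to verify is that each finalized piece is a genuine cluster in the sense of the definition, i.e., it is connected and has at most two boundary nodes; this holds because a pending cluster acquires a second boundary node only along a single root-to-node path (the spine), and the construction never merges two subtrees each already carrying two boundary nodes through $v$ in a way that creates a third.

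For the size and count guarantees, the per-cluster bound of at most $\tau$ nodes is immediate from the greedy rule, since we finalize precisely when the bound would be violated. The bound $|CP| = O(n/\tau)$ is the part that needs the real accounting: I would argue by a charging or amortization argument that a constant fraction of the finalized clusters have size $\Omega(\tau)$, or equivalently that we cannot create too many ``small'' clusters. The standard way to see this is that every time we finalize a cluster of size less than $\tau$, it is because merging it with a sibling or with the parent would have exceeded $\tau$, so each small cluster can be paired with (charged to) a sufficiently large chunk of weight, giving $O(n/\tau)$ clusters overall. Since $\sum_{C \in CP} |V(C)| = O(n)$ once overlaps at boundary nodes are accounted for, and each cluster carries $\Omega(\tau)$ amortized weight, the count follows.

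The main obstacle I anticipate is \emph{not} the size or count bookkeeping but rather controlling the number of boundary nodes per cluster under the greedy merges: ensuring that combining $v$ with several children's pending clusters never produces a cluster with three or more boundary nodes. This requires being careful that at most one of the merged children is allowed to contribute a path cluster (a spine reaching down), while the rest must be leaf clusters, so that the merged object has at most the boundary node $v$ inherits upward plus one spine endpoint below. Handling this cleanly — and arguing that the greedy choice can always respect it without blowing up the cluster count — is the delicate step; the linear $O(n)$ running time then follows directly since the post-order traversal touches each node and edge a constant number of times. Since this is precisely the content of the cited decompositions~\cite{alstrup1997minimizing, alstrup2000maintaining, alstrup2002improved, frederickson1997ambivalent, bille2011tree}, I would ultimately invoke those references for the full verification rather than reprove every case.
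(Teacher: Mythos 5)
Note first that the paper does not prove this lemma at all: it is imported from well-known tree decompositions, with a pointer to Bille and G{\o}rtz~\cite{bille2011tree} for a direct proof. Your proposal ends in the same place---invoking those references---so as a citation it matches the paper, and the greedy bottom-up construction you sketch is indeed the standard route those references take.

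As a self-contained argument, though, the sketch breaks exactly at the step you flag as delicate, and in a way worth spelling out. Your invariant---every pending cluster has the current node as its unique boundary node, hence is a valid leaf cluster---cannot be maintained. As soon as some cluster is finalized below a node $v$, the node $v$ keeps an edge leaving its piece of the partition downward; when $v$'s pending cluster is later merged into its parent, the resulting pending cluster has two boundary nodes ($v$ and the current top), i.e., it is path-pending, not leaf-pending. That by itself is harmless, but at a branching node that receives two such path-pending children, merging both produces a connected subgraph with three boundary nodes, which is not a cluster under the paper's definition. The repair is to maintain the weaker invariant that each pending cluster has at most one lower boundary node, and to let each merge absorb at most one path-pending child, force-finalizing the others even when they are small. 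This, however, invalidates your counting argument as stated: it is no longer true that a cluster of size less than $\tau$ is finalized only when a merge would exceed $\tau$, so the $O(n/\tau)$ bound must additionally charge these structurally forced finalizations; bounding their number is the real content of the cited proofs. Since you explicitly defer that verification to the references---exactly as the paper does---the proposal is acceptable as a citation, but it would not stand on its own without fixing both the invariant and the accounting.
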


\paragraph{Data Structure} We build a clustering of the suffix tree of $S$ as in Lemma~\ref{lem:cluster}, with cluster size at most $\tau$, where $\tau$ is some parameter satisfying $0 < \tau \leq n$.
Then the counting data structure consists of:
\begin{itemize} 
    \item The suffix tree of $S$, with some additional information for each node.
    For each node $v$ we store:
    \begin{itemize}
        \item The range $v$ spans in the suffix array, i.e., $\textrm{range}(v)$.
        \item A bit that indicates if $v$ is on a spine.
        \item If $v$ is on a spine, a pointer to the lower boundary node of the spine.
        \item If $v$ is a leaf, the local rank of $v$. That is, the rank of $v$ in the text order of the leaves in the cluster that contains $v$. Note that this is at most $\tau$.
    \end{itemize}
        \item The inverse suffix array of $S$.
    \item A range successor data structure on the suffix array of $S$.

    \item An array $M(u,v)$ of length $\lfloor\frac{n}{\tau}\rfloor + 1$ for every pair of boundary nodes $(u,v)$.
    For $1 \leq x \leq \lfloor\frac{n}{\tau}\rfloor$, 
    $M(u,v)[x]$ is the number of consecutive occurrences $(i,j)$ of $\str{u}$ and $\str{v}$ with distance at most $x$. We set $M(u,v)[0] = 0$. 
    

 
\end{itemize}

Denote $M(u,v)[\alpha,\beta]=M(u,v)[\beta]-M(u,v)[\alpha-1]$, that is, $M(u,v)[\alpha,\beta]$ is the number of consecutive occurrences of $\str{u}$ and $\str{v}$ with a distance in $[\alpha,\beta]$.

\paragraph{Space Analysis.}
We store a constant amount per node in the suffix tree.
The suffix tree and inverse suffix array occupy $O(n)$ space. For the orthogonal range successor data structure we  use the data structure of Nekrich and Navarro~\cite{DBLP:conf/swat/NekrichN12} which uses $O(n)$ space and $O(\log^\epsilon n)$ time, for constant $\epsilon > 0$.
There are $O\left(n^2/\tau^2\right)$ pairs of boundary nodes and for each pair we store an array of length $O\left(n/\tau\right)$. 
Therefore the total space consumption is $O\left(n + n^3/\tau^3\right)$.

\subsection{Query Algorithm}
We now show how to count the consecutive occurrences $(i,j)$ with a distance in the interval, i.e. $\alpha \leq j - i \leq \beta$. We call each such pair a \textit{valid occurrence}.

To answer a query we split the query interval $[\alpha, \beta]$ into two: $[\alpha,\lfloor\frac{n}{\tau}\rfloor]$ and $[\lfloor\frac{n}{\tau}\rfloor+1, \beta]$, and handle these separately. 

\subsubsection{Handling Distances $> \frac{n}{\tau}$.}
We start by finding the loci of $P_1$ and $P_2$ in the suffix tree. As shown in Section~\ref{subsec:cons_occ}, this allows us to find the consecutive occurrence containing a given occurrence of either $P_1$ or $P_2$.
We implicitly partition the string $S$ into segments of (at most) $\lfloor n / \tau\rfloor$ characters by calculating $\tau$ segment boundaries. Segment $i$, for $0 \leq i < \tau$, contains characters $S[i \cdot \lfloor\frac{n}{\tau}\rfloor, (i+1) \cdot \lfloor\frac{n}{\tau}\rfloor - 1]$ and segment $\tau$ (if it exists) contains the characters $S[\tau\cdot \lfloor\frac{n}{\tau}\rfloor,n-1]$. 
We find the last occurrence of $P_1$ in each segment by performing a series of $\orp$ queries, starting from the beginning of the last segment. 
Each time an occurrence $i$ is found we perform the next query from the segment boundary to the left of $i$, continuing until the start of the string is reached.
For each occurrence $i$ of $P_1$ found in this way, we use $\findi{i}$ to find the consecutive occurrence $(i,j)$ if it exists.
We check each of them, discard any with distance $\leq \frac{n}{\tau}$ and count how many are valid.

\subsubsection{Handling Distances $\leq \frac{n}{\tau}$.}
In this part, we only count valid occurrences with distance $\leq \frac{n}{\tau}$. Consider the loci of $P_1$ and $P_2$ in the suffix tree.  Let $C_i$ denote the cluster that contains $\loc(P_i)$ for $i=1,2$. There are two main cases.  

\paragraph{At least one locus is not on a spine} If either locus is in a small subtree hanging off a spine in a cluster or in a leaf cluster, we directly find all consecutive occurrences as follows: 
If $\loc(P_1)$ is in a small subtree then we use $\findi{i}$ on each leaf $i$ below $\loc(P_1)$ to find all consecutive occurrences, count the valid occurrences and terminate.
If only $\loc(P_2)$ is in a small subtree then we use $\findj{j}$ for each leaf $j$ below $\loc(P_2)$, count the valid occurrences and terminate.

\paragraph{Both loci are on the spine} If neither locus is in a small subtree then both are on a spine.
Let $(b_1, b_2)$ denote the lower boundary nodes of the clusters $C_1$ and $C_2$, respectively.
There are two types of consecutive occurrences $(i,j)$: 
\begin{enumerate}[(i)]
\item Occurrences where either $i$ or $j$ are inside $C_1$ resp. $C_2$.
\item Occurrences below the boundary nodes, that is, $i$ is below $b_1$ and $j$ is below $b_2$.
\end{enumerate}
See Figure~\ref{fig:count-query-and-false}(a). We describe how to count the different types of occurrences next.

\subparagraph{Type (i) occurrences}
To find the valid occurrences $(i,j)$ where either $i\in C_1$ or $j\in C_2$  we do as follows. 
First we find all the consecutive occurrences $(i,j)$ where $i$ is a leaf in $C_1$ by computing $\findi{i}$ for all leaves $i$  below $\loc(P_1)$ in $C_1$.  We count all valid occurrences we find in this way. 
Then we find all remaining consecutive occurrences $(i,j)$ where $j$ is a leaf in $C_2$ by computing $\findj{j}$ for all leaves $j$ below $\loc(P_2)$ in $C_2$. If $\findj{j}$ returns a valid occurrence $(i,j)$ we use the inverse suffix array to check if the leaf $i$ is below $b_1$. This can be done by checking whether $i$'s position in the suffix array is in $\textrm{range}(b_1)$.
If $i$ is below $b_1$ we count the occurrence, otherwise we discard it. 

\subparagraph{Type (ii) occurrences}
Next, we count the consecutive occurrences $(i,j)$, where both $i$ and $j$ are below $b_1$ and $b_2$, respectively. We will use the precomputed table, but we have to be a careful not to overcount.
By its construction, $M(b_1,b_2)[\alpha,\min(\lfloor\frac{n}{\tau}\rfloor, \beta)]$ is the number of consecutive occurrences $(i',j')$ of $\str{b_1}$ and $\str{b_2}$, where $\alpha \leq j' - i' \leq \min(\lfloor\frac{n}{\tau}\rfloor, \beta)$.
However, not all of these occurrence $(i',j')$ are necessarily \emph{consecutive} occurrences of $P_1$ and $P_2$, as there could be an occurrence of $P_1$ in $C_1$ or $P_2$ in $C_2$ which is between $i'$ and $j'$. 
We call such a pair $(i',j')$ a \textit{false occurrence}.
See Figure~\ref{fig:count-query-and-false}(b).
We proceed as follows.


\begin{figure}[t]
    \includegraphics[width=0.9\linewidth]{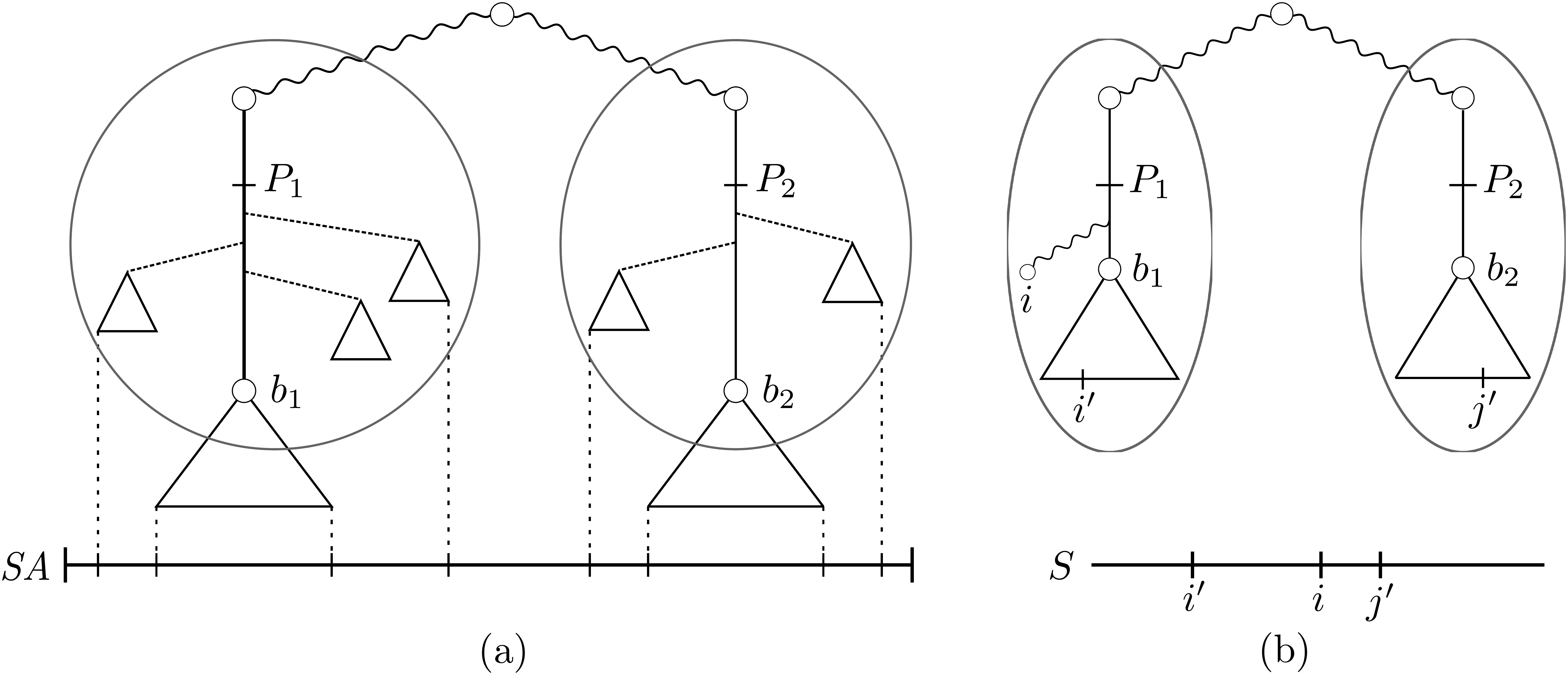}
    \caption{
    (a) Any consecutive occurrences $(i,j)$ of $P_1$ and $P_2$ is either also a consecutive occurrence of $\str{b_1}$ and $\str{b_2}$, or $i$ or $j$ are within the cluster. The suffix array is shown in the bottom with the corresponding ranges marked.
    (b) Example of a false occurrence. Here $(i',j')$ is a consecutive occurrence of $\str{b_1}$ and $\str{b_2}$, but not a consecutive occurrence of $P_1$ and $P_2$ due to $i$. The string $S$ is shown in bottom with the positions of the occurrences marked.
    }
    \label{fig:count-query-and-false}
\end{figure}

\begin{enumerate}
\item Set $c = M(b_1,b_2)[\alpha,\min(\lfloor\frac{n}{\tau}\rfloor, \beta)]$.
\item 

Construct the lists $L_i$ containing the leaves in $C_i$ that are below  $\loc(P_i)$ sorted by text order for $i = 1,2$. We can obtain the lists as follows. 
Let $[a,b]$ be the range of $\loc(P_i)$ and $[a',b']=\textrm{range}(b_i)$. Sort the leaves in $[a,a'-1] \cup [b'+1,b]$ using their local rank. 
 
\item Until both lists are empty iteratively pick and remove the smallest element $e$ from the start of either list. There are two cases. 
\begin{itemize}
\item $e$ is an element of $L_1$. 
\begin{itemize}
\item Compute $j' = \ors(\textrm{range}(b_2), e)$ to get the closest occurrence of $\str{b_2}$ after $e$. 
\item Compute $i' = \orp(\textrm{range}(b_1), j')$ to get the closest occurrence of $\str{b_1}$ before~$j'$. 
\end{itemize}
\item $e$ is an element of $L_2$. 
\begin{itemize}
\item Compute $i' = \orp(\textrm{range}(b_2), e)$ to get the previous occurrence $i'$ of $\str{b_1}$. 
\item Compute $j' = \ors(\textrm{range}(b_1), j')$ to get the following occurrence $j'$ of $\str{b_2}$. 
\end{itemize}
\end{itemize}

If $\alpha \leq j'-i' \leq \min(\lfloor\frac{n}{\tau}\rfloor, \beta)$ and $i' < e< j'$ decrement $c$ by  one. 
We
skip any subsequent occurrences that are also inside $(i',j')$. As the lists are sorted by text order, all occurrences that are within the same consecutive occurrence $(i',j')$ are handled in sequence.
\end{enumerate}

Finally, we add the counts of the different type of occurrences. 

\paragraph{Correctness.}
Consider a consecutive occurrence $(i,j)$ where $j-i > \frac{n}{\tau}$.
Such a pair must span a segment boundary, i.e., $i$ and $j$ cannot be in the same segment. 
As $(i, j)$ is a \textit{consecutive} occurrence, $i$ is the last occurrence of $P_1$ in its segment and $j$ is the first occurrence of $P_2$ in its segment. With the $\orp$ queries we find all occurrences of $P_1$ that are the last in their segment. 
We thus check and count all valid occurrences of large distance in the initial pass of the segments.

If either locus is in a small subtree we use $\findi{.}$ or $\findj{.}$ on the leaves below that locus, which by the arguments in Section~\ref{subsec:cons_occ} will find all consecutive occurrences. 

Otherwise, both loci are on a spine.
To count type~(i) occurrences we use $\findi{i}$ for all leaves $i$ below $\loc(P_1)$ in $C_1$ and $\findj{j}$ for all leaves $j$ below $\loc(P_2)$ in $C_2$.   
However, any valid occurrence $(i,j)$ where both $i \in C_1$ and $j \in C_2$ is found by both operations. Therefore, whenever we find a valid occurrence $(i,j)$ via $i=\findj{j}$ for $j \in C_2$, we only count the occurrence if $i$ is below $b_1$.
Thus we count all type~(i) occurrences exactly once.


To count type~(ii) occurrences we start with $c=M(b_1,b_2)[\alpha,\min(\lfloor\frac{n}{\tau}\rfloor, \beta)]$, which is the number of consecutive occurrences $(i',j')$ of $\str{b_1}$ and $\str{b_2}$, where $\alpha \leq j' - i' \leq \min(\lfloor\frac{n}{\tau}\rfloor, \beta)$.
Each $(i',j')$ is either also a consecutive occurrence of $P_1$ and $P_2$, or there exists an occurrence of $P_1$ or $P_2$ between $i'$ and $j'$.
Let $(i',j')$ be a false occurrence and let w.l.o.g.\ $i$ be an occurrence of $P_1$ with $i'<i<j'$.
Then $i$ is a leaf in $C_1$, since $(i',j')$ is a \emph{consecutive} occurrence of $\str{b_1}$ and $\str{b_2}$.
In step 3 we check for each leaf inside the clusters below the loci, if it is between a consecutive occurrence $(i',j')$ of $\str{b_1}$ and $\str{b_2}$ and if $\alpha \leq j' - i' \leq \min(\lfloor\frac{n}{\tau}\rfloor,\beta)$.
In that case $(i',j')$ is a false occurrence and we adjust the count $c$.
As $(i',j')$ can have multiple occurrences of $P_1$ and $P_2$ inside it, we skip subsequent occurrences inside $(i',j')$. 
After adjusting for false occurrences, $c$ is the number of type~(ii) occurrences.

\paragraph{Time Analysis.}
We find the loci in $O(|P_1| + |P_2|)$ time. 
Then we perform a number of range successor and find consecutive queries.  The time for a find consecutive query is bounded by the time to do a constant number of range successor queries.
To count the large distances we check at most $\tau$ segment boundaries and thus perform $O(\tau)$ range successor and find consecutive  queries. If either locus is not on a spine we check the leaves below that locus. There are at most $\tau$ such leaves due to the clustering.  To count type~(i) occurrences we check the leaves below the loci and inside the clusters. There are at most $2\tau$ such leaves in total. To count type~(ii) occurrences we check two lists constructed from the leaves inside the clusters below the loci. There are again at most $2\tau$ such leaves in total. For each of these $O(\tau)$ leaves we use a constant number of range successor and find consecutive queries. Thus the time for this part is bounded  by the time to perform $O(\tau)$ range successor queries.

Using the data structure of Nekrich and Navarro~\cite{DBLP:conf/swat/NekrichN12}, each range successor query takes $O(\log^{\epsilon} n)$ time so the total time for these queries is $O(\tau \log^{\epsilon} n)$.
For type~(ii) occurrences we sort two lists of size at most $\tau$ from a universe of size $\tau$, which we can do in $O(\tau)$ time.
Thus, the total query time is $O(|P_1| + |P_2| + \tau \log^{\epsilon} n)$.

\bigskip

Setting $\tau=\Theta(n^{2/3})$ we get a data structure that uses $O\left(n + n^3/\tau^3\right) = O(n)$ space and has query time $O(|P_1| + |P_2| + \tau \log^{\epsilon} n) = O(|P_1| + |P_2| + n^{2/3} \log^{\epsilon} n)$, for constant $\epsilon > 0$.
Given an $\qexistsab$ query we answer with a $\qcountab$ query, terminating when the first valid occurrence is found.
This concludes the proof of Theorem~\ref{thm:main-result}(\ref{thm:ab-existence-and-count}).

\section{Reporting}\label{sec:ab-reporting}
In this section, we describe our data structure for reporting queries.
Note that in Section \ref{sec:ab-existence}, we explicitly find all valid occurrences \emph{except} for type~(ii) occurrences, where we use the precomputed values. In this section, we describe how we can use a recursive scheme to report these.

The main idea, inspired by fast set intersection by Cohen and Porat~\cite{DBLP:journals/tcs/CohenP10}, 
is to build a binary structure which allows us to recursively divide into subproblems of half the size. Intuitively, the subdivision is a binary tree where every node contains the suffix tree of a substring of $S$. 
We use this structure to find type~(ii) occurrences by recursing on smaller trees.
We define the binary decomposition of the suffix tree next.
The details of the full solution follow after that.

\subsection{Induced Suffix Tree Decomposition}\label{subsec:OSTD}
 Let $T$ be a suffix tree of a string $S$ of length $n$. For an interval $[a,b]$ of \emph{text positions}, we define $T[a,b]$ to be the subtree of $T$ \emph{induced} by the leaves in $[a,b]$: That is, we consider the subtree consisting of leaves in $[a,b]$ together with their ancestors. We then delete each node that has only one child in the subtree and contract its ingoing and outgoing edge. 
See Figure \ref{fig:induced_subtree}. 

\begin{figure}[t]
    \centering
    \includegraphics[width=0.8\textwidth]{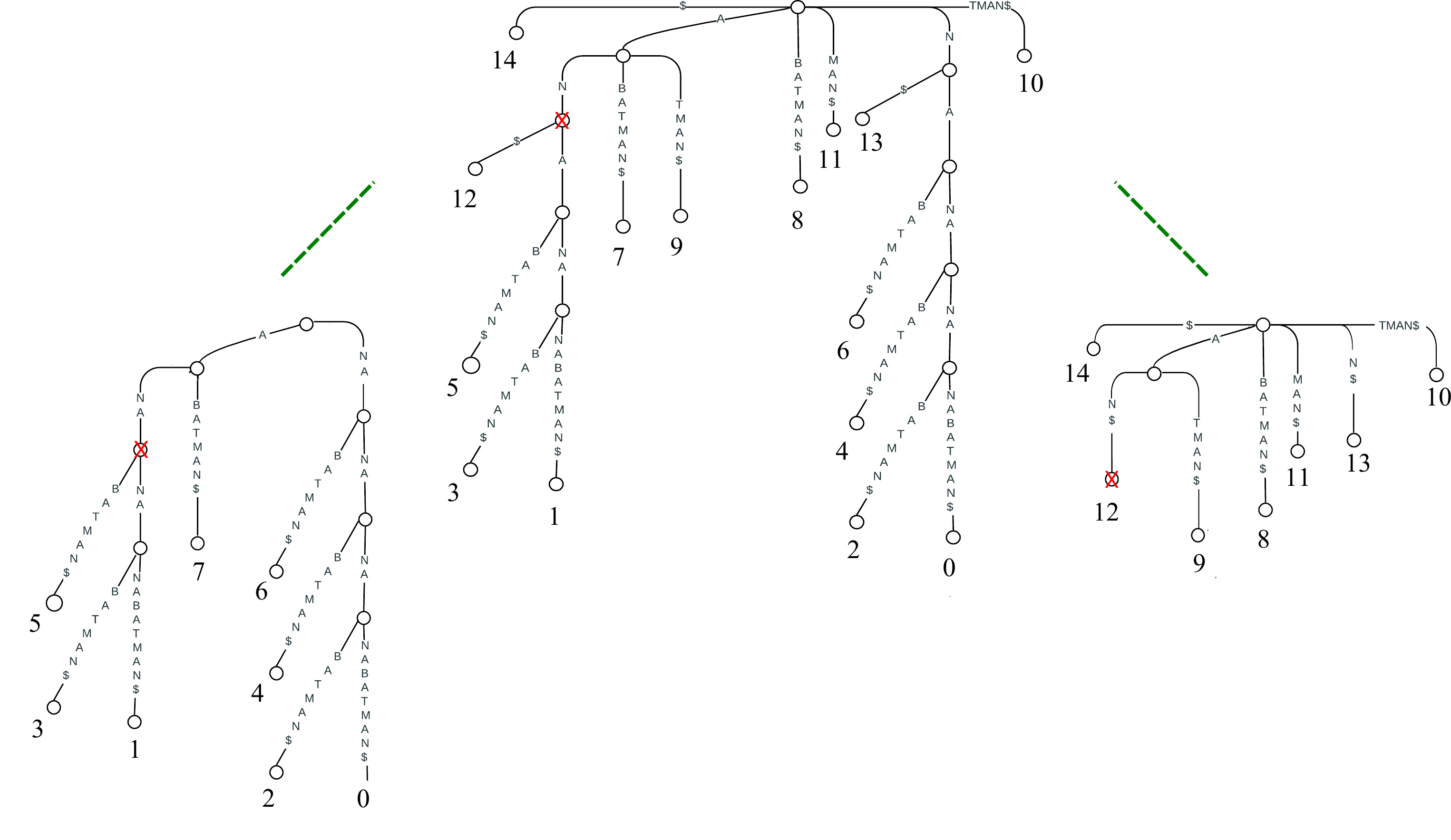}
    \caption{The suffix tree of \texttt{NANANANABATMAN\$} together with its children trees $T[0,7]$ and $T[8,14]$. The red crosses show a node in the parent tree and and its successor nodes in the two children trees.}
    \label{fig:induced_subtree}
\end{figure}

The \emph{induced suffix tree decomposition of $T$} now consists of
a higher level binary tree structure, the \emph{decomposition tree}, where each node corresponds to an induced subtree of the suffix tree. The root corresponds to  $T[0,n-1]$, and whenever we move down in the decomposition tree, the interval splits in half.  We also associate a level with each of the induced subtrees, which is their depth in the decomposition tree. In more detail, the decomposition tree is a binary tree such that:
\begin{itemize}
    \item The root of the decomposition tree corresponds to $T[0,n-1]$ and has level $0$.
    \item For each $T[a,b]$ of level $i$ in the decomposition, if $b-a>1$, its two children in the decomposition tree are $T[a,c]$ and  $T[c+1,b]$ where $c=\lfloor\frac{a+b}{2}\rfloor$; we will sometimes refer to these as ``children trees" to differentiate from children in the suffix tree.
\end{itemize}


The decomposition tree is a balanced binary tree and the total size of the induced subtrees in the decomposition is $O(n\log n)$: There are at most $2^i$ decomposition tree nodes on level $i$, each of which corresponds to an induced subtree of size $O\left(\frac{n}{2^i}\right)$
, and thus the total size of the trees on each of the $O(\log n)$ levels is~$O(n)$.

For each node $v$ in $T[a,b]$, we define the \emph{successor node} of $v$ in each of the children trees of $T[a,b]$ in the following way: If $v$ exists in the child tree, the successor node is $v$. Else, it is the closest descendant which is present. Note that from the way the induced subtrees are constructed, $v$ has at most one successor node in each child tree.

The induced suffix tree decomposition of $S$ consists of:
\begin{itemize}
    \item Each $T[a,b]$ stored as a compact trie.
    \item For each $T[a,b]$ we store a ``cropped" suffix array $SA_{[a,b]}$, that is, the suffix array of $S[a,b]$ with the original indices within $S$.
  \item For each node $v$ in $T[a,b]$ we store a pointer from $v$ to its successor nodes in each child tree, if it exists,
    and the interval in $SA_{[a,b]}$ that corresponds to the leaves below $v$. 
\end{itemize}
Since we store only constant information per node in any $T[a,b]$, the total space usage of this is $O(n\log n)$.

\subsection{Data Structure}
The reporting data structure consists of:
\begin{itemize}
    \item The induced suffix tree decomposition for $S$,
    \item An orthogonal range successor data structure on the suffix array, and
    \item The data structure from Section~\ref{sec:ab-existence} for each $T[a,b]$ in the induced suffix tree decomposition with parameters $n_i$ and $\tau_i$, where $n_i=\lfloor\frac{n}{2^i}\rfloor$ and $\tau_i=\Theta(n_i^{2/3})$, such that $n_i/\tau_i=\lfloor n_i^{1/3}\rfloor$. The only change is that we do not store an orthogonal range successor data structure for each of the induced subtrees. 
    \end{itemize}

\paragraph{Space Analysis.} 
We use the $O(n \log \log n)$ space and $O(\log \log n)$ time orthogonal range successor structure of Zhou~\cite{DBLP:journals/ipl/Zhou16}.
The existence data structure for each $T[a,b]$ of level $i$ is linear in $n_i$. 
Thus, by the arguments of Section \ref{subsec:OSTD}, the total space is $O(n\log n)$.

\subsection{Query Algorithm}

The main idea behind the algorithm is the following: For large distances, as in Section \ref{sec:ab-existence}, we implicitly segment $S$ to find all consecutive occurrences of at least a certain distance. For small distances, we are going to use the cluster decomposition and counting arrays to decide whether valid occurrences exist. That is, if one of the loci is in a small subtree, we use $\findi{.}$ resp. $\findj{.}$ to find all consecutive occurrences. Else, we perform a query as in Section \ref{sec:ab-existence} to decide whether any valid occurrences exist, and if yes, we recurse on smaller subtrees.

The idea here is, that in the induced suffix tree decomposition, the trees are divided in half by \emph{text position} - therefore, a \emph{consecutive} occurrence either will be fully contained in the left child tree, fully contained in the right child tree, or have the property that the occurrence of $P_1$ is the maximum occurrence in the left child tree and the occurrence of $P_2$ is the minimum occurrence in the right child tree. We will check the border case each time when we recurse.

 In detail, we do the following:
 We find the loci of $P_1$ and $P_2$ in the suffix tree.
    As in the previous section, we check $\tau_0$ segment boundaries with $\tau_0=\Theta(n^{2/3})$ to find all consecutive occurrences with distance within $[\max(\alpha,\lfloor n^{1/3}\rfloor),\beta]$. Now, we only have to find consecutive occurrences of distance within $[\alpha, \min(\beta, \lfloor n^{1/3}\rfloor)]$ in $T=T[0,n-1]$. In general, let $n_i=\lfloor\frac{n}{2^i}\rfloor$ and $\beta_i=\min(\beta, \lfloor n_i^{1/3}\rfloor)$ and let $T[a,b]$ be an induced subtree of level $i$. 
    
    To find all consecutive occurrences with distance within $[\alpha, \beta_i]$ in $T[a,b]$ of level $i$, 
    given the loci of $P_1$ and $P_2$ in $T[a,b]$, recursively do the following:
    \begin{itemize}
    \item If any of the loci is not on a spine of a cluster, we find all consecutive occurrences using $\findi{.}$ resp. $\findj{.}$ and check for each of them if they are valid; we report all such, then terminate.
    \item Else, we use the query algorithm for small distances from  Section~\ref{sec:ab-existence} to decide whether a valid occurrence with distance within $[\alpha,\beta_i]$ exists in $T[a,b]$.
    
    If such a valid occurrence exists, we recurse; that is,
 set $c=\lfloor \frac{a+b}{2}\rfloor$.
    We use \orp\ to find the last occurrence of $P_1$ before and including $c$, and $\ors$ to find the first occurrence of $P_2$ after $c$. Then we check if they are consecutive (again using \orp\ and \ors), and if it is a valid occurrence. If yes, we add it to the output.
    Then, for both $S[a,c]$ and $S[c+1,b]$, we implicitly partition into segments of size $\lfloor n_{i+1}^{1/3}\rfloor$ and find and output all valid occurrences of distance $>n_{i+1}^{1/3}$.
    Then we follow pointers to the successor nodes of the current loci to find the loci of $P_1$ and $P_2$ in the children trees $T[a,c]$ and $T[c+1,b]$ and recurse on those trees to find all consecutive occurrences of distance within $[\alpha, \beta_{i+1}]$

\end{itemize}

\paragraph{Correctness.}
At any point before we recurse on level $i$, we check all consecutive occurrences of distance $>n_{i+1}^{1/3}$ by segmenting the current substring of $S$. By the arguments of the previous section, we will find all such valid occurrences. Thus, on the subtrees of level $i+1$, we need only care about consecutive occurrences with distance in $[\alpha,\beta_{i+1}]$.

    By the properties of the induced suffix tree decomposition, a consecutive occurrence of $P_1$ and $P_2$ that is present in $T[a,b]$ will either be fully contained in $T[a,c]$, or in $T[c+1,b]$, or the occurrence of $P_1$ is the last occurrence before and including $c$ and the occurrence of $P_2$ is the first occurrence after $c$. We check the border case each time we recurse. Thus, no consecutive occurrences get lost when we recurse. If we stop the recursion, it is either because one of the loci was in a small subtree or that no valid occurrences with distance within $[\alpha, \beta_i]$ exists in $T[a,b]$. In the first case we found all valid occurrences with distance within $[\alpha, \beta_i]$ in $T[a,b]$ by the same arguments as in Section~\ref{sec:ab-existence}.  Thus, we find all valid occurrences of $P_1$ and $P_2$.
\paragraph{Time Analysis.} 
 For finding the loci, we first spend $O(|P_1|+|P_2|)$ time in the initial suffix tree $T[0,n-1]$; after that, we spend constant time each time we recurse to follow pointers.
    The rest of the time consumption is dominated by the number queries to the orthogonal range successor data structure, which we will count next.
    
    Consider the recursion part of the algorithm as a traversal of the decomposition tree, and consider the subtree of the decomposition tree we traverse. Each leaf of that subtree is a node where we stop recursing. Since we only recurse if we know there is an occurrence to be found, there are at most $O(\occ)$ leaves. Thus, we traverse at most $O(\occ \log n)$ nodes. 
    
 Each time we recurse, we spend a constant number of \ors\ and \orp\ queries to check the border cases. Additionally, we spend 
 $O(n_i^{2/3})$ such queries on each node of level $i$ that we visit in the decomposition tree: For finding the ``large" occurrences, and additionally either for reporting everything within a small subtree or doing an existence query. For finding large occurrences, there are $O(n_i^{2/3})$  segments to check. The number of orthogonal range successor queries used for existence queries or reporting within a small subtree is bounded by the number of leaves within a cluster, which is also $O(n_i^{2/3})$.
 
    Now, let $x$ be the number of decomposition tree nodes we traverse and let $l_i$, $i=1,\dots,x$, be the level of each such node. The goal is to bound $\sum_{i=1}^x \left(\frac{n}{2^{l_i}}\right)^{2/3}$.  By the argument above, $x=O(\occ \log n)$.
    Note that because the decomposition tree is binary we have that $\left(\sum_{i=1}^x \frac{1}{2^{l_i}}\right)\leq \log n$. 
The number of queries to the orthogonal range successor data structure is thus asymptotically bounded by:

    \begin{align*}
        \sum_{i=1}^x \left(\frac{n}{2^{l_i}}\right)^{2/3}&=n^{2/3}\sum_{i=1}^x \left(\frac{1}{2^{l_i}}\right)^{2/3}\cdot 1\\ &\leq n^{2/3}\left( \sum_{i=1}^x \left(\frac{1}{2^{l_i}}\right)^{\frac{2}{3}\cdot \frac{3}{2}}\right)^{2/3}\left(\sum_{i=1}^x 1^3\right)^{1/3}
        \\&= n^{2/3}\left(\sum_{i=1}^x \frac{1}{2^{l_i}}\right)^{2/3}x^{1/3}\\
        &=O(n^{2/3}\occ^{1/3}\log n)
    \end{align*}
    For the inequality, we use Hölder's inequality, which holds for all $(x_1,\dots,x_k)\in \mathbb{R}^k$ and $(y_1,\dots, y_k)\in \mathbb{R}^k$ and $p$ and $q$ both in $(1,\infty)$ such that $1/p+1/q = 1$:
    \begin{align} \sum_{i=1}^{k}|x_iy_i|\leq \left(\sum_{i=1}^k |x_i|^p\right)^{1/p}\left(\sum_{i=1}^k |y_i|^q\right)^{1/q}\label{eq:holder}
    \end{align}
    We apply (\ref{eq:holder}) with $p=3/2$ and $q=3$. 
    
Since the data structure of Zhou~\cite{DBLP:journals/ipl/Zhou16} uses $O(\log \log n)$ time per query, the total running time of the algorithm is $O(|P_1|+|P_2|+n^{2/3}\occ^{1/3}\log n\log \log n)$.
     This concludes the proof of Theorem~\ref{thm:main-result}(\ref{thm:ab-reporting}).

\section{Lower Bound}\label{sec:lower-bound}
We now prove the conditional lower bound from Theorem~\ref{thm:lower_bound} based on set intersection. We use the framework and conjectures as stated in  Goldstein~et~al.~\cite{DBLP:conf/wads/GoldsteinKLP17}. Throughout the section, let $\mathcal{I} = S_1, ,\dots, S_m$ be a collection of $m$ sets of total size $N$ from a universe $U$. The 
\emph{SetDisjointness problem} is to preprocess $\mathcal{I}$ into a compact data structure, such that given any pair of sets $S_i$ and $S_j$, we can quickly determine if $S_i \cap S_j = \emptyset$. We use the following conjecture. 
\begin{conjecture}[Strong SetDisjointness Conjecture]\label{con:strong_disjoint}
Any data structure that can answer SetDisjointness queries in $t$ query time must use $\widetilde{\Omega}\left(\frac{N^{2}}{t^2}\right)$ space.
\end{conjecture}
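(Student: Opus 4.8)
The plan is first to recognize that Conjecture~\ref{con:strong_disjoint} is a \emph{hardness assumption} rather than a statement I could discharge: an unconditional proof would amount to a polynomial space--time cell-probe lower bound against arbitrary data structures, which is far beyond the reach of current techniques. Accordingly, what I would actually aim to establish is that this is the \emph{right} tradeoff to conjecture, by (a) proving the matching \emph{upper} bound, so that the statement reduces to ``this simple data structure is optimal,'' and (b) situating it among the neighbouring fine-grained conjectures alongside which it is meant to sit.

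The upper bound is the part I would carry out in full, and it is routine. Fix the target query time $t$ and call a set \emph{large} if it contains more than $t$ elements; since the sets have total size $N$, there are at most $N/t$ large sets. Store every set in a hash-based membership structure using $O(N)$ space. To answer a query on $(S_i,S_j)$: if at least one is small, scan the $\le t$ elements of the smaller set and test each for membership in the other, deciding disjointness in $O(t)$ time; if both are large, read the answer in $O(1)$ time from a precomputed table indexed by the at most $N/t$ large sets, a table of $(N/t)^2=N^2/t^2$ bits. The total space is $O(N+N^2/t^2)$, which is $O(N^2/t^2)$ in the interesting regime $t\le\sqrt N$, with worst-case query time $O(t)$. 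Hence the bound $\widetilde\Theta(N^2/t^2)$ is achievable, and the conjecture is precisely the claim that it cannot be beaten.

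For the lower-bound direction the honest position is that no unconditional argument is available, and this is exactly where the main obstacle lies: establishing $\widetilde\Omega(N^2/t^2)$ space for query time $t$ would require a strong polynomial cell-probe lower bound, for which no techniques are known. I would therefore only offer \emph{conditional} support. Concretely, I would point to the reductions of Goldstein~et~al.~\cite{DBLP:conf/wads/GoldsteinKLP17} tying these SetDisjointness/SetIntersection tradeoffs to the $3$SUM-indexing and related conjectures, and observe that the hard regime is realized already by structured instances --- for example by instances in which every element has the same frequency, the equal-frequency variant that the present paper develops along the way. Together these indicate that any improvement over the trivial tradeoff would entail a corresponding improvement for a whole family of well-studied problems, which is the evidence the conjecture rests on.

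In summary, the step I expect to be routine is the matching upper bound above; the step that cannot be completed is an unconditional lower bound; and the intended role of the statement --- consistent with its being labelled a \emph{Conjecture} --- is as an assumption feeding Theorem~\ref{thm:lower_bound}, not as a result to be proved.
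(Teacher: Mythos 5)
You have correctly identified that this statement is a hardness assumption imported from Goldstein~et~al.~\cite{DBLP:conf/wads/GoldsteinKLP17}, and indeed the paper offers no proof of it --- it is stated as a conjecture and used only as the hypothesis of Theorem~\ref{thm:lower_bound}, exactly the role you describe. Your matching upper bound (hashing for small sets, a precomputed $(N/t)^2$ table for the at most $N/t$ sets of size greater than $t$) is correct and is the standard argument showing the conjectured tradeoff is tight, so your treatment is consistent with the paper's.
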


\subsection{SetDisjointness with Fixed Frequency}
We define the following weaker variant of the SetDisjointness problem: the \emph{$f$-FrequencySetDisjointness problem} is the SetDisjointness problem where every element occurs in precisely $f$ sets. We now show that any solution to the $f$-FrequencySetDisjointness problem implies a solution to SetDisjointness, matching the complexities up to polylogarithmic factors. 
    
    \begin{lemma} \label{lem:SDFF} Assuming the Strong SetDisjointness Conjecture, every data structure that can answer  $f$-FrequencySetDisjointness queries in time $O(N^{\delta})$, for $\delta\in[0,1/2]$, must use $\widetilde{\Omega}\left({N^{2-2\delta-o(1)}}{}\right)$ space.\end{lemma}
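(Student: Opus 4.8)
The plan is to prove the lower bound by \emph{reduction}: I will show that a space-efficient data structure for the fixed-frequency variant would yield a space-efficient data structure for general \emph{SetDisjointness}, contradicting Conjecture~\ref{con:strong_disjoint}. The crux is a frequency-equalizing transformation of an arbitrary instance $\mathcal{I}=S_1,\dots,S_m$ that preserves every disjointness answer while inflating the total size by only a polylogarithmic factor, so that the complexity of the fixed-frequency problem controls that of the unrestricted one.

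First I would bucket the elements of the universe $U$ by frequency. For $k=0,1,\dots,O(\log N)$ let $B_k$ be the set of elements $e$ whose frequency $f_e$ (number of sets containing $e$) satisfies $2^k \le f_e < 2^{k+1}$. For each $k$ I build a separate instance $\mathcal{I}_k$ of $f^{(k)}$-FrequencySetDisjointness with target $f^{(k)}=2^{k+1}$: take the projected sets $S_i^{(k)}=S_i\cap B_k$, and then pad each element $e\in B_k$ up to frequency \emph{exactly} $f^{(k)}$ by inserting it into $f^{(k)}-f_e$ fresh \emph{dummy} sets that are never queried. Since every $e\in B_k$ already has $f_e\ge 2^k$ while needing at most $f^{(k)}-f_e\le 2^k\le f_e$ extra incidences, the padding at most doubles the number of incidences inside each bucket; summing over the $O(\log N)$ buckets, the total size of all $\mathcal{I}_k$ is $O(N)$ and each $N_k:=|\mathcal{I}_k|$ is $O(N)$.

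Next I would verify correctness of the query reduction. To answer $S_i\cap S_j\overset{?}{=}\emptyset$ I query every $\mathcal{I}_k$ on the pair $(S_i^{(k)},S_j^{(k)})$ and report ``not disjoint'' iff some bucket does. Each original element lies in exactly one bucket, so a witness $e\in S_i\cap S_j$ forces $e\in S_i^{(k)}\cap S_j^{(k)}$ for its bucket $k$; conversely, since projections are subsets, any witness reported by a bucket is a genuine common element. The dummy sets never participate in a query, hence cannot manufacture a spurious intersection. This gives a SetDisjointness data structure whose query time is $O(\log N)$ times the per-bucket query time and whose space is the sum of the per-bucket spaces.

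Finally I would combine this with the conjecture. Suppose, for contradiction, that for every frequency $f$ there were an $f$-FrequencySetDisjointness structure answering queries in $O(N^{\delta})$ time using $O(N^{2-2\delta-\epsilon})$ space for some constant $\epsilon>0$. Applying these to each $\mathcal{I}_k$ yields a SetDisjointness structure with query time $\widetilde{O}(N^{\delta})$ and space $\sum_k O(N_k^{2-2\delta-\epsilon})\le O(\log N)\cdot O(N^{2-2\delta-\epsilon})=\widetilde{O}(N^{2-2\delta-\epsilon})$, using $N_k=O(N)$ and that there are $O(\log N)$ buckets. For large $N$ this is asymptotically below the $\widetilde{\Omega}(N^{2}/N^{2\delta})=\widetilde{\Omega}(N^{2-2\delta})$ space demanded by Conjecture~\ref{con:strong_disjoint}, a contradiction. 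Hence no polynomial saving $\epsilon$ is achievable, which is exactly the claimed $\widetilde{\Omega}(N^{2-2\delta-o(1)})$ bound, the $o(1)$ in the exponent absorbing the polylogarithmic overhead of the reduction. I expect the main obstacle to be the design of the padding so that it is simultaneously \emph{answer-preserving} and only \emph{polylog-expanding}; the dyadic bucketing is precisely what keeps the per-bucket padding within a constant factor, and the remaining combination step is then a routine invocation of the conjecture.
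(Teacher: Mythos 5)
Your proposal is correct and follows essentially the same route as the paper's proof: dyadic bucketing of elements by frequency, padding each bucket to a fixed frequency via dummy sets that are never queried (with the same observation that padding at most doubles the total size), querying all $O(\log N)$ bucket instances, and deriving a contradiction with the Strong SetDisjointness Conjecture. The only cosmetic difference is that you allocate fresh dummy sets per element while the paper reuses a common pool of $2^{j-1}$ dummy sets per bucket; both choices affect only the number of sets, not the total size, so the bounds are unchanged.
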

    \begin{proof}
 Assume there is a data structure $D$ solving the $f$-FrequencySetDisjointness problem in time $O(N^{\delta})$ and space $O\left({N^{2-2\delta-\epsilon}}{}\right)$ for constant $\epsilon$ with $0<\epsilon<1$. Let $\mathcal{I}=S_1,\dots, S_m$ be a given instance of SetDisjointness, where each $S_i$ is a set of elements from universe $U$, and assume w.l.o.g.\ that $m$ is a power of two. 
 
 Define the \emph{frequency} of an element, $f_e$, as the number of sets in $\mathcal{I}$ that contain $e$. We construct $\log m$ instances $\mathcal{I}_1,\dots, \mathcal{I}_{\log m}$ of the $f$-FrequencySetDisjointness problem. For each $j$, $1\leq j\leq \log m$, the instance $\mathcal{I}_j$ contains the following sets: 
    \begin{itemize}
        \item  For each $i\in[1,m]$ a set $S_i^j$ containing all $e\in S_i$ that satisfy $2^{j-1}\leq f_e < 2^j$;
        \item $2^{j-1}$ ``dummy sets", which contain extra copies of elements to make sure that all elements have the same frequency. That is, we add every element with $2^{j-1}\leq f_e < 2^j$ to the first $2^j-f_e$ dummy sets. These sets will not be queried in the reduction. 
    \end{itemize}
    Instance $\mathcal{I}_j$ has $O(m)$ sets and every element occurs exactly $2^j$ times.
    Further, the total number of elements in all the instances is at most $2N$.
    We now build  $f$-FrequencySetDisjointness data structures $D_j=D(\mathcal{I}_j)$ for each of the $\log m$ instances.

    To answer a SetDisjointness query for two sets $S_{i_1}$ and $S_{i_2}$, we query $D_j$ for the sets $S_{i_1}^j$ and $S_{i_2}^j$, for each $1\leq j\leq  \log m $ . If there exists a $j$ such that $S_{i_1}^j$ and $S_{i_2}^j$ are not disjoint, we output that $S_i$ and $S_j$ are not disjoint. Else, we output that they are disjoint. 
    
    
    

     If there exists $e\in S_{i_1}\cap S_{i_2}$, let $j$ be such that $2^{j-1}\leq f_e < 2^j$. Then $e\in S^j_{i_1}\cap S^j_{i_2}$, and we will correctly output that the sets are not disjoint. 
    If $S_{i_1}$ and $S_{i_2}$ are disjoint, then, since $S_{i_1}^j$ is a subset of $S_{i_1}$ and $S_{i_2}^j$ is a subset of $S_{i_2}$, the queried sets are disjoint in every instance. Thus we also answer correctly in this case.

   Let $N_j$ denote the total number of elements in $\mathcal{I}_j$. For each $j$, we have $N_j\leq 2N$ and thus $N_j^{2-2\delta-\epsilon}\leq (2N)^{2-2\delta-\epsilon}$. Thus, the space complexity is asymptotically bounded by
    \begin{align*}
        \sum_{j=1}^{\lceil\log m\rceil}N_j^{2-2\delta-\epsilon} = O(N^{2-2\delta-\epsilon}\log m ).
    \end{align*}
    Similarly, we have $N_j^{\delta}=O(N^{\delta})$ and so the time complexity is asymptotically bounded by
    \begin{align*}
        \sum_{j=1}^{\lceil\log m\rceil}N_j^{\delta} = O(N^{\delta}\log m ).
    \end{align*}
    This is a contradiction to Conjecture \ref{con:strong_disjoint}.

    \end{proof}

\subsection{Reduction to Gapped Indexing}
We can reduce  the $f$-FrequencySetDisjointness problem to $\qexistsab$ queries of the gapped indexing problem: Assume we are given an instance of the $f$-FrequencySetDisjointness problem with a total of $N$ elements. Each distinct element occurs $f$ times. Assume again w.l.o.g.\ that the number of sets $m$ is a power of two. Assign to each set $S_i$ in the instance a unique binary string $w_i$ of length $\log m$. 
Build a string $S$ as follows: Consider an arbitrary ordering $e_1,e_2,...$ of the distinct elements present in the instance. Let $\$$ be an extra letter not in the alphabet. The first $B=f\cdot\log m+f$ letters are a concatenation of $w_i\$$ of all sets $S_i$ that $e_1$ is contained in, sorted by $i$. This block is followed by $B$ copies of \$. Then, we have $B$ symbols consisting of the strings for each set that $e_2$ is contained in, again followed by $B$ copies of \$, and so on. See Figure~\ref{fig:sdff_reduction} for an example.
\begin{figure}[t]
    \centering
    \includegraphics[width=0.6\linewidth]{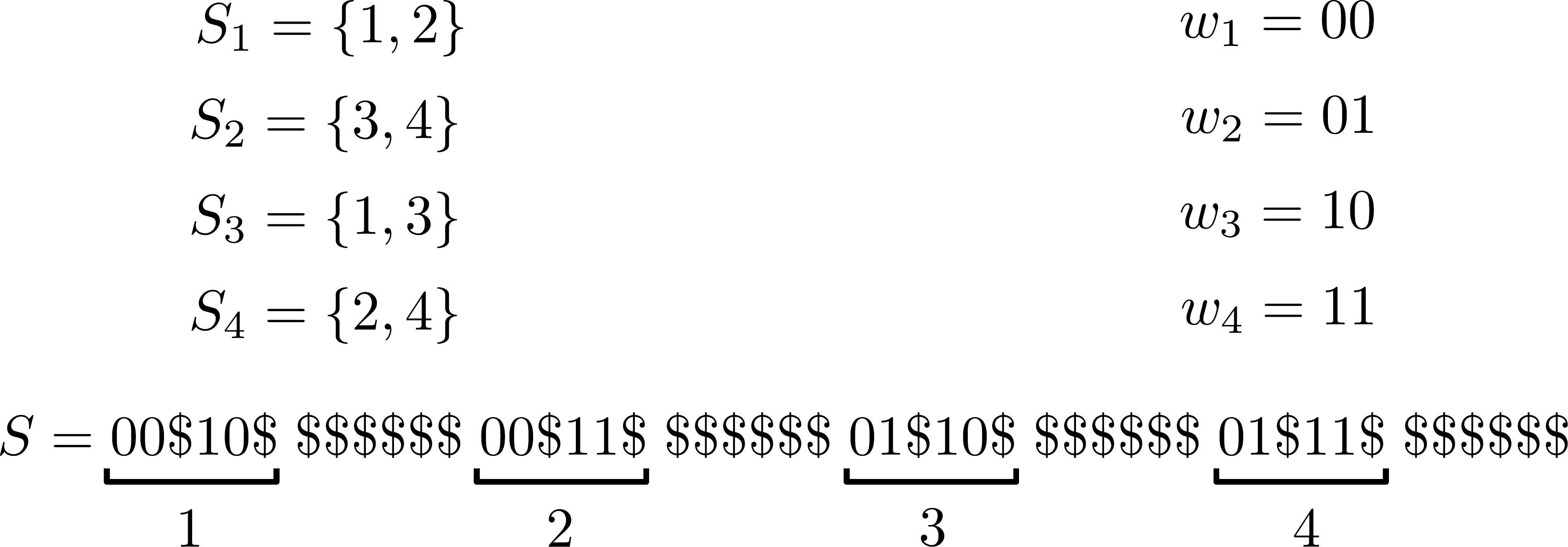}
    \caption{Instance of  $f$-FrequencySetDisjointness problem reduced to $\qexistsab$. Alphabet $\Sigma = \{0,1\}$ and fixed frequency $f = 2$, resulting in block size $B = 2 \cdot 2 + 2 = 6$.}
    \label{fig:sdff_reduction}
\end{figure}

For a query for two sets $S_i$ and $S_j$, where $i<j$, we set $P_1=w_i$ and $P_2=w_j$, $\alpha=0$, and $\beta = B$. If the sets are disjoint, then there are no occurrences which are at most $B$ apart. Otherwise $w_i$ and $w_j$ occur in the same block, and $w_j$ comes after $w_i$. 
The length of the string $S$ is $2N\log m+2N$: In the block for each element, we have $\log m+1$ letters for each of its occurrences, and it is followed by a $\$$ block of the same length. 
    
This means that if we can solve $\qexistsab$ queries in $s(n)$ space and $t(n)+O(|P_1|+|P_2|)$ time, where $n$ is the length of the string, we can solve the  $f$-FrequencySetDisjointness problem in $s(2N\log m+2N)$ space and $t(2N\log m+2N)+O(\log m)$ time.
Together with Lemma~\ref{lem:SDFF}, Theorem~\ref{thm:lower_bound} follows.

\section{Gapped Indexing for $[0, \beta]$ Gaps} \label{sec:leq-b-reporting}
In this section, we consider the special case where the queries are one sided intervals of the form $[0,\beta]$. 
    We give a data structure supporting the following tradeoffs:
    \begin{theorem}\label{thm:extension}
    Given a string of length $n$, we can
    \begin{enumerate}[(i)]
        \item construct an $O(n)$ space data structure that supports $\qexistsab(P_1,P_2,0,\beta)$ queries in $O(|P_1| + |P_2| + \sqrt{n}\log^{\epsilon} n)$ time for constant $\epsilon>0$, or
        \label{thm:b-existence-and-count}
        \item construct an $O(n\log n)$ space data structure that supports $\qcountab(P_1,P_2,0,\beta)$ and $\qreportab(P_1,P_2,0,\beta)$ queries in $O(|P_1| + |P_2| + (\sqrt{n\cdot\occ})\log \log n)$ time, where $\occ$ is the size of the output.
        \label{thm:b-existence}
    \end{enumerate}
    \end{theorem}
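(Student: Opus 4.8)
The plan is to reuse the machinery of Sections~\ref{sec:ab-existence} and~\ref{sec:ab-reporting} essentially unchanged, but to exploit that with $\alpha=0$ the answer depends only on \emph{minimum} distances, so a single number per pair of boundary nodes suffices. This lets us push the cluster-size parameter down to $\tau=\Theta(\sqrt n)$ without blowing up the space. For part~(i) I would build the cluster decomposition of the suffix tree exactly as in Section~\ref{sec:ab-existence}, but with $\tau=\Theta(\sqrt n)$, and split the gap range at $\sqrt n$. Since $\alpha=0$, a valid occurrence of distance $\le\beta$ exists iff the \emph{minimum} distance over all consecutive occurrences of $P_1,P_2$ is $\le\beta$. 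Hence, instead of the length-$O(n/\tau)$ arrays $M(u,v)$, I store for each of the $O(n^2/\tau^2)=O(n)$ pairs of boundary nodes $(u,v)$ only the single value $d^\ast(u,v)$, the minimum distance of a consecutive occurrence of $\str{u},\str{v}$ that is at most $\sqrt n$; this is $O(n)$ space total. At query time I handle large distances ($>\sqrt n$) by the segmentation of Section~\ref{sec:ab-existence} in $O(\sqrt n)$ range-successor queries, small type-(i) occurrences by enumerating the $\le 2\tau$ cluster leaves below the loci with $\findi{\cdot}$/$\findj{\cdot}$, and small type-(ii) occurrences by looking up $d^\ast(b_1,b_2)$; I answer \yes\ iff the overall minimum distance found this way is $\le\beta$.

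The main correctness subtlety of part~(i) is that $d^\ast(b_1,b_2)$ may be attained by a \emph{false} occurrence, so the looked-up distance could be spuriously small. The key observation is that this is harmless for existence: if the minimizing consecutive occurrence $(i',j')$ of $\str{b_1},\str{b_2}$ is false, then some occurrence of $P_1$ or $P_2$ lies strictly between $i'$ and $j'$, and $\findi{\cdot}$/$\findj{\cdot}$ applied to that cluster leaf yields a \emph{genuine} consecutive occurrence of $P_1,P_2$ of strictly smaller distance, which is a type-(i) occurrence already enumerated. Thus the reported minimum is $\le\beta$ exactly when a genuine valid occurrence exists, and the lookup never causes a false positive. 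Using the Nekrich--Navarro range-successor structure this gives $O(n)$ space and $O(|P_1|+|P_2|+\sqrt n\log^{\epsilon}n)$ time, proving part~(i).

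For part~(ii) I would reuse the induced suffix tree decomposition of Section~\ref{sec:ab-reporting}, but with $\tau_i=\Theta(\sqrt{n_i})$ and the part-(i) existence structure stored in each $T[a,b]$, giving total space $O(n\log n)$. The recursion is as in Section~\ref{sec:ab-reporting}: at each level handle large distances by segmentation, test and emit the unique border occurrence at the split point, and descend. The one change that matters is that before descending into a child I run an existence query \emph{on that child} (for distance $\le\beta_{i+1}$) and recurse only if a valid occurrence is present there, rather than descending into both children as in Section~\ref{sec:ab-reporting}. For counting I accumulate the numbers of valid border, large, and small-subtree occurrences; for reporting I output them. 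Correctness follows as in Section~\ref{sec:ab-reporting}: by the properties of the decomposition every consecutive occurrence is the border occurrence of exactly one node, or is caught by segmentation at the level where its distance first exceeds the threshold, so each is counted or reported exactly once.

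The crux is the time bound, and it is where the per-child existence test pays off. Because I descend only into children that actually contain a valid occurrence, every visited node of level $i$ contains at least one valid occurrence in its (text-disjoint) subtree, so the number $k_i$ of visited level-$i$ nodes satisfies $k_i\le\min(2^i,\occ)$. Each such node costs $O(\sqrt{n_i})=O(\sqrt n\,2^{-i/2})$ range-successor queries, so the total number of queries is
\[
\sqrt{n}\sum_i k_i\,2^{-i/2}\ \le\ \sqrt{n}\sum_i \min(2^i,\occ)\,2^{-i/2}\ =\ O\!\left(\sqrt{n\,\occ}\right),
\]
where the geometric sum is dominated by the crossover level $2^i=\occ$. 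With Zhou's $O(\log\log n)$-time range-successor structure this yields $O(|P_1|+|P_2|+\sqrt{n\,\occ}\log\log n)$ time. I expect this refined pruning — recursing per child instead of into both children — to be the main obstacle, since it is precisely what replaces the coarse $x=O(\occ\log n)$ node count of Section~\ref{sec:ab-reporting} by the sharper weighted bound $\sum_i\min(2^i,\occ)2^{-i/2}$, thereby shaving the extra logarithmic factor and matching Theorem~\ref{thm:lower_bound} up to polylogarithmic terms.
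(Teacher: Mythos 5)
Your proposal is correct and follows the paper's overall architecture: part~(i) is the cluster decomposition with $\tau=\Theta(\sqrt n)$ storing a single minimum distance per pair of boundary nodes, justified by the same false-occurrence argument (a spuriously small minimizer certifies a genuine consecutive occurrence of $P_1,P_2$ of strictly smaller distance, which is harmless when $\alpha=0$); part~(ii) is the induced suffix tree decomposition with a per-node existence structure. Two deviations are worth recording. First, in part~(i) the paper stores the \emph{uncapped} minimum distance $M(u,v)$, so no large-distance segmentation is needed at all; your capped $d^\ast$ plus segmentation is correct but slightly more machinery than necessary. Second, and more substantively, your part~(ii) time analysis takes a different route from the paper's, and its bookkeeping is tighter. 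The paper recurses into \emph{both} children whenever the current node satisfies $M(b_1,b_2)\leq\beta$, and bounds the cost by Cauchy--Schwarz over the $x=O(\occ)$ leaves ``where we stop the algorithm and output,'' using $\sum_{i=1}^x 1/2^{l_i}\leq 1$; this accounting glosses over stopping nodes that spend $O(\sqrt{n_i})$ range-successor queries on enumeration but output nothing (both children of a recursing node can be of this kind, e.g.\ when the only valid occurrence in the parent is its border occurrence), and there can be $\Theta(\occ\log(n/\occ))$ such nodes. Your level-by-level charging handles exactly this: every node on which cost is incurred is a child of a node certified to contain a valid occurrence, the level-$i$ intervals are disjoint, so at most $\min(2^i,\occ)$ nodes per level carry the $O(\sqrt{n_i})$ cost, and $\sum_i \min(2^i,\occ)\sqrt{n/2^i}=O(\sqrt{n\cdot\occ})$ by splitting the sum at the crossover level. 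One caveat on your closing interpretation: the per-child existence test is not itself the source of the saving, since testing a child costs $\Theta(\sqrt{n_{i+1}})$ range-successor queries --- exactly the same as descending into it and stopping there --- and the paper's parent-level test $M(b_1,b_2)\leq\beta$ already certifies an occurrence in the parent. What actually delivers the $O(\sqrt{n\cdot\occ})$ bound is your charging argument, which applies verbatim to the paper's both-children recursion as well.
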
 
    Note that since the results match (up to $\log$ factors) the best known results for set intersection, this is about as good as we can hope for. We mention here that for this specific problem, a similar tradeoff follows from the strategies used by Hon~et~al.~\cite{hon2014space}. The results from that paper include (among others) a data structure for documents such that given a query of two patterns $P_1$ and $P_2$ and a number $k$, one can output the $k$ documents with the closest occurrences of $P_1$ and $P_2$. Thus, the problem is slightly different, however, with some adjustments, the results from Theorem~\ref{thm:extension} follow (up to a $\log$ factor). We show a simple, direct solution.
    
    The data structure is a simpler version of the data structure considered in the previous sections. The main idea is that for each pair of boundary nodes $u$ and $v$, we do not have to store an array of distances, but only one number that carries all the information: the smallest distance of a consecutive occurrence of $\str{u}$ and $\str{v}$. Thus, for existence, we can cluster with $\tau=\sqrt{n}$ to achieve linear space, and we do not need to check large distances separately. For the reporting solution, we store the decomposition from Section~\ref{subsec:OSTD}, and use the matrix $M$ to decide where to recurse. In the following we will describe the details.

    \paragraph{Existence data structure.} For solving \qexistsab\ queries in this setting, we cluster the suffix tree with parameter $\tau=\sqrt{n}$. Again, we store the linear space orthogonal range successor data structure by Nekrich and Navarro~\cite{DBLP:conf/swat/NekrichN12} on the suffix array. For each pair of boundary nodes $(u,v)$, we store at $M(u,v)$ the minimum distance of a consecutive occurrence of $\str{u}$ and $\str{v}$. 
    The total space is linear. To query, we proceed similarly as in Section $\ref{sec:ab-existence}$ for the ``small distances": We find the loci of $P_1$ and $P_2$. If any of the loci is not on the spine, we check all consecutive occurrences using $\findi{.}$ resp. $\findj{.}$. If both loci are on the spine, denote $b_1,~b_2$ the lower boundary nodes of the respective clusters. Find $M(b_1,b_2)$. If $M(b_1,b_2)\leq \beta$, we can immediately return $\yes$: If a valid occurrence $(i',j')$ of $\str{b_1}$ and $\str{b_2}$ exists, then either $(i',j')$ is a consecutive occurrence of $P_1$ and $P_2$, or there exists a consecutive occurrence of smaller distance. Otherwise, that is if $M(b_1,b_2)>\beta$, all valid occurrences $(i,j)$ have the property that either $i$ is in the cluster of $\loc(P_1)$ or $j$ is in the cluster of $\loc(P_2)$, and we check all such pairs using $\findi{.}$ resp. $\findj{.}$. The running time is ${O}(|P_1| + |P_2| + \sqrt{n}\log^\epsilon n)$.
    
    \paragraph{Reporting data structure.}
    For the reporting data structure, we store the decomposition of the suffix tree as described in Section~\ref{subsec:OSTD} and the $O(n\log n)$ space orthogonal range successor data structure by Zhou \cite{DBLP:journals/ipl/Zhou16} on the suffix array. For each induced subtree of level $i$ in the decomposition, we store the existence data structure we just described.
    \paragraph{Reporting algorithm.}
The algorithm follows a similar, but simpler, recursive structure as in Section \ref{sec:ab-reporting}.
        We begin by finding the loci of $P_1$ and $P_2$.
        If either of the loci is not on a spine, we find all consecutive occurrences using $\findi{.}$ resp. $\findj{.}$, check if they are valid, report these, and terminate.
        If both loci are on a spine, we check $M(b_1,b_2)$ for the lower boundary nodes $b_1$ and $b_2$. If $M(b_1,b_2)>\beta$, all valid occurrences $(i,j)$ have the property that either $i$ is in the cluster of $\loc(P_1)$ or $j$ is in the cluster of $\loc(P_2)$. We check all such pairs using $\findi{.}$ resp. $\findj{.}$, report the valid occurrences, and terminate.
     If $M(b_1,b_2)\leq \beta$, we recurse on the children trees. That is, we check the border case and follow pointers to the loci in the children trees. 
    \paragraph{Analysis.}
    The space is $O(n\log n)$, just as in Section \ref{sec:ab-reporting}.
    
    For time analysis, we spend $O(\sqrt{\frac{n}{2^{l_i}}})$ orthogonal range successor queries on the nodes in the decomposition tree of level $l_i$ where we stop the recursion. For all other nodes we visit in the tree traversal, we only spend a constant number of queries.  
        In total, we visit $O(\occ\log (n/\occ)+\occ)$ decomposition tree nodes (by following the analysis in \cite{DBLP:journals/tcs/CohenP10}), and we spend $O(\sqrt{\frac{n}{2^{l_i}}})$ orthogonal range successor queries on $O(\occ)$ many such nodes.
    
    We use the same notation as in Section \ref{sec:ab-reporting}. By $x=O(\occ)$ we now denote the number of nodes where we stop the algorithm and output. Since each such node can be seen as a leaf in a binary tree, $\sum_{i=1}^x \frac{1}{2^{l_i}}\leq 1$. We use the Cauchy-Schwarz inequality (which is a special case of Hölders with $p=q=2$). We get as an asymptotic bound for the number of orthogonal range successor queries:
    \begin{align*}
    \sum_{i=1}^x \sqrt{\frac{n}{2^{l_i}}}&=\sqrt{n}\sum_{i=1}^x \sqrt{\frac{1}{2^{l_i}}}\cdot 1 \\
    &\leq \sqrt{n} \sqrt{\sum_{i=1}^x \frac{1}{2^{l_i}}} \sqrt{\sum_{i=1}^x 1}\\
    &\leq \sqrt{n x} = O(\sqrt{n\cdot\occ}).
    \end{align*}
    Note that since $\occ\log (n/\occ)=O(\occ\sqrt{n/\occ})=O(\sqrt{n\cdot \occ})$, this brings the total number of orthogonal range successor queries to  $O(\occ + \sqrt{n\cdot\occ})$.
    Using the data structure by Zhou~\cite{DBLP:journals/ipl/Zhou16}, the time bound from Theorem~\ref{thm:extension} follows.


\section{Conclusion}
    We have considered the problem of gapped  indexing  for consecutive occurrences.
   We have given a linear space data structure that can count the number of such occurrences. For the reporting problem, we have given a near-linear space data structure. The running time for both includes an $O(n^{2/3})$ term, which forms a gap of $O(n^{1/6})$ to the conditional lower bound of $O(\sqrt{n})$. Thus, the most obvious open question is whether we can close this gap, either by improving the data structure or finding a stronger lower bound. 
   
   Further, we have used the property that there can only be few consecutive occurrences of large distances. Thus, our solution cannot be easily extended to finding \emph{all} pairs of occurrences with distance within the query interval. An open question is if it is possible to get similar results for that problem. Lastly, document versions of similar problems have concerned themselves with finding all documents that contain $P_1$ and $P_2$ or the top-$k$ of smallest distance; conditional lower bounds for these problems are also known. It would be interesting to see if any of these results be extended to finding all documents that contain a (consecutive) occurrence of $P_1$ and $P_2$ that has a distance within a query interval.

\bibliographystyle{plainurl}
\bibliography{References}

\end{document}